\documentclass{easychair}

\title{Subsumption in \flbotreg with TBoxes Is in \exptime%
	\thanks{This work was supported by the National Science Centre, Poland,
		under the OPUS 29 project no.\@ 2025/57/B/ST6/04535,
		``From Games to Algorithms: Studying Reasoning in Description Logics''.}
}

\author{
Michał Henne\inst{1}
	\and
Barbara Morawska\inst{1}
	\and
Paweł Parys\inst{2}
}

\institute{
Institute of Computer Science, University of Opole, Poland\\
	\email{$\{$michal.henne, barbara.morawska$\}$@uni.opole.pl}
	\and
Institute of Informatics, University of Warsaw, Poland.\\
	\email{parys@mimuw.edu.pl}\\
}

\authorrunning{Henne, Morawska and Parys}

\titlerunning{Subsumption in \flbotreg with TBoxes Is in \exptime}

\usepackage[utf8]{inputenc}
\usepackage{amsmath,amsthm, amssymb,paralist,aliascnt,thm-restate}
\usepackage[capitalize,nameinlink]{cleveref}
\usepackage{xspace}

\usepackage{todonotes}

\newtheorem{theorem}{Theorem}[section]
\newtheorem{lemma}[theorem]{Lemma}\Crefname{lemma}{Lemma}{Lemmata}


\theoremstyle{definition}
\newtheorem{definition}[theorem]{Definition}
\newtheorem{example}[theorem]{Example}

\theoremstyle{remark}

\Crefname{equation}{Formula}{Formulae}

\newcommand{\flo}{\ensuremath{\mathcal{FL}_0}\xspace}
\newcommand{\flbot}{\ensuremath{\mathcal{FL}_\bot}\xspace}
\newcommand{\flreg}{\ensuremath{\mathcal{FL}_{\mathit{reg}}}\xspace}
\newcommand{\flbotreg}{\ensuremath{\mathcal{FL}_{\bot\mathit{reg}}}\xspace}

\newcommand{\names}{\ensuremath{\mathsf{N}_\mathsf{C}}\xspace}
\newcommand{\roles}{\ensuremath{\mathsf{N}_\mathsf{R}}\xspace}
\newcommand{\SigmaC}{\ensuremath{\Sigma_\mathsf{C}}\xspace}
\newcommand{\SigmaR}{\ensuremath{\Sigma_\mathsf{R}}\xspace}
\newcommand{\exptime}{\mbox{\upshape{\textsc{ExpTime}}}\xspace}
\newcommand{\pspace}{\upshape{\textsc{PSpace}}\xspace}
\newcommand{\exptimebold}{E\scalebox{0.85}{XP}T\scalebox{0.85}{IME}}
\newcommand{\pspacebold}{PS\scalebox{0.85}{PACE}}
\newcommand{\ptime}{\upshape{\textsc{PTime}}}

\renewcommand{\phi}{\varphi}
\newcommand{\Nat}{\mathbb{N}}
\newcommand{\bots}{\sharp}
\newcommand{\Aa}{\mathcal{A}}
\newcommand{\Gg}{\mathcal{G}}
\newcommand{\Ii}{\mathcal{I}}
\newcommand{\Jj}{\mathcal{J}}
\newcommand{\Ll}{\mathcal{L}}
\newcommand{\Rr}{\mathcal{R}}
\newcommand{\Tt}{\mathcal{T}}
\newcommand{\pwin}{p_{\mathsf{win}}}
\newcommand{\univ}{\mathcal{U}}
\newcommand{\PA}{P_\mathsf{A}}
\newcommand{\PE}{P_\mathsf{E}}

\begin{document}
	
	\maketitle

	\begin{abstract}
	Description Logics (DLs) are a family of formal languages used for representing and reasoning about structured knowledge in terms of concepts and their relationships. The expressive power of a DL depends on the constructors available for building complex concepts.
	
In this work, we investigate subsumption in the restricted description logic \flbotreg and the related fragments \flreg, \flbot, and \flo. These formalisms support value restrictions over role names, where the subscript $\mathit{reg}$ indicates the use of regular expressions over roles.

	Subsumption between two concept descriptions in \flbotreg and \flreg is \pspace-com\-plete. When subsumption is considered with respect to a TBox (i.e., a set of axioms), the complexity increases to \exptime-complete. These results can be derived either from complexity bounds established for more expressive logics or from algorithms designed for harder reasoning problems.
	
	We reprove the \pspace-completeness result and provide a new proof of \exptime-completeness for \flreg and \flbotreg with TBoxes via a novel reduction to parity pushdown games. Our algorithm relies only on the constructs available in these logics and may therefore be implemented more easily.
	
	\end{abstract}

	\section{Introduction}
	
	Description logics constitute a family of formal languages designed to represent structured knowledge in terms of concept hierarchies and the relationships between them.
	They provide a rigorous framework for describing classes of objects, their properties, and the relations that hold among them.
	A central aim of description logics is to support precise concept definitions and to enable automated reasoning about class membership and inter-concept relationships.
	
	A fundamental reasoning task in this context is the decision problem of whether a subsumption relation holds between two given concept descriptions.
	If subsumption is decidable in a description logic, then the concept hierarchy (taxonomy) induced by the subsumption relation can be computed automatically, facilitating ontology construction and maintenance \cite{Baader2007}.

	In this paper, we present a novel algorithm for the subsumption problem in certain restricted description logics, with potential applications in ontology maintenance.
	
	Furthermore, it is often desirable to restrict the class of admissible interpretations of an ontology to those that satisfy a set of general constraints, while preserving the decidability of key reasoning tasks such as subsumption.
	These constraints can be formalized as a finite set of axioms describing relationships between concepts, known as a TBox.
	
	The subsumption problem, both with and without TBoxes, has been extensively studied across a broad spectrum of description logics.
	
	A particular description logic is characterized by the set of constructors allowed for concept definitions. Individual logics therefore differ in their expressive power, depending on the constructors provided.
	
	In this paper, we focus on description logics that admit only conjunction ($\sqcap$), the top concept ($\top$), value restrictions ($\forall r.C$), and, optionally, the bottom concept ($\bot$), interpreted as the empty set.
	These constructors are used to build complex concept descriptions from a given set of atomic concept and role names.
	
	The logic comprising the first three constructors is denoted by \flo, while the extension including $\bot$ is denoted by \flbot.
	If value restrictions are generalized to regular expressions over the alphabet of role names, the resulting logic is denoted by \flreg, and its extension with the bottom concept $\bot$ is denoted by \flbotreg.
	
	For instance, in \flo, one can describe the concept of an undergraduate student as a student who attends only undergraduate courses by the equivalence
	\[
	\texttt{UndergraduateStudent} \equiv \texttt{Student} \sqcap \forall\texttt{attends}.\texttt{UndergraduateCourse},
	\]
	where 
	\texttt{attends} is a role name relating a person and a course they attend.
	
	In \flbot, one can express the fact that professors do not attend any courses by the subsumption axiom
	\[
	\texttt{Professor} \sqsubseteq\forall\texttt{attends}.\bot.
	\]
	
	In \flreg, one can describe the more complex concept of a computer science course whose prerequisites are computer science courses,
	all prerequisites of those prerequisites are computer science courses, and so on, for any number of steps through the prerequisite relation:
	\[
	\forall\texttt{prerequisite}^*.\texttt{ComputerScienceCourse}.
	\]
	
	The computational complexity of the subsumption problem between two concept descriptions depends on the set of constructors available in the underlying logic.
	\Cref{table:subsumption} summarizes the current state of the art. The entries in bold font are those in which we are particularly interested in this paper.
	
 The first two entries in the first column indicate the existence of polynomial-time algorithms for deciding subsumption in \flo and \flbot, results that have been known for a long time (see, e.g., \cite{Baader2007}).
	
	The third and fourth entries in the first column were not stated anywhere explicitly up to our knowledge, although these complexities are consequences of other results. For example, \pspace complexity for \flreg follows from \pspace-completeness of subsumption in \flo with cyclic concept definitions interpreted in gfp-semantics \cite{Baader1990}, which is equivalent to \flreg \cite{Baader1991}. The complexity of \flbotreg follows from the characterization of 
	concept equivalence in Baader and K\"usters \cite{Baader2002}.
In this paper we use a similar characterization of subsumption as in Baader and K\"usters \cite{Baader2002} to prove the complexity of the subsumption problem  in \flreg and \flbotreg (Section~\ref{section:subsumption-no-TBox}). Another way to infer this complexity upper bound would be to refer to the result of Baader and K\"usters \cite{Baader2001a} which states that the matching problem is in \pspace for \flreg, and matching subsumes the subsumption problem. 
	
	The first entry in the second column is established by Baader, Brandt, and Lutz \cite{Baader2005} by a reduction to and from the subsumption problem in $\flo^{tf}$ with TBox \cite{Toman2005}. The description logic $\flo^{tf}$ is a variant of \flo in which all roles are interpreted as total functions. The \exptime-completeness of the problem was also established independently by Hofmann \cite{Hofmann2005}.
	
	The result mentioned in the second entry in the second column  follows from Theorem 2.2 in  Baader et al.~\cite{Baader2022}.
	
	The third and fourth entries in the second column are consequences of the fact that satisfiability in Propositional Dynamic Logic (PDL) is decidable in \exptime \cite{FischerL79}. The description logics \flreg and \flbotreg are contained in PDL
	and the TBoxes can be \emph{internalized} in a satisfiability problem as described in De Giacomo and Lenzerini \cite{Giacomo1994} (and observed in \cite{Schild1991, Baader1993}).
	
	\begin{table}
		\centering
		\renewcommand{\arraystretch}{1.3}
		\setlength{\tabcolsep}{10pt}
		\begin{tabular}{|l|c|c|}
			\hline
			\textbf{DL} & \textbf{without TBox} & \textbf{with TBox} \\
			\hline
			\flo & \ptime & \exptime-complete \\
			\hline
			\flbot & \ptime & \exptime-complete \\
			\hline
			\flreg & \textbf{\pspacebold-complete} & \textbf{\exptimebold-complete} \\
			\hline
			\flbotreg & \textbf{\pspacebold-complete} & \textbf{\exptimebold-complete} \\
			\hline
		\end{tabular}
		\caption{Complexity of the subsumption problem in various description logics.}
		\label{table:subsumption}
	\end{table}
	
	Clearly the \exptime-hardness result from the first entry in the second column ($\flo$ with TBoxes) implies \exptime-hardness of the subsumption problem in the more general logics in the last three entries.

Hence, the results presented in this paper are not new and belong to the common body of knowledge. Nevertheless, the aim of this paper is to provide a novel method for solving the subsumption problem in all the logics considered—a method that does not rely on stronger description logics or on algorithms designed to solve more difficult problems. In this way, we aim to solve the subsumption problem using only the means provided by a given logic. Since the logics considered here are very restricted, the proposed procedure may be implemented relatively easily.
	
	The approach to constructing new algorithms is to establish an equivalence between an instance of the subsumption problem with TBoxes and a suitably defined parity pushdown game associated with the instance,
	and then to apply existing techniques from game theory to solve it.
	
	Parity pushdown games arise from the combination of pushdown systems, that is, transition systems equipped with a stack,
	and parity games---two-player infinite-duration games played on finite graphs, in which the winner is determined by the parity of the priorities visited infinitely often.
	This framework has found widespread application in the verification of recursive programs and pushdown models~\cite{AlurRecursive, Walukiewicz2001},
	as well as in interprocedural static analysis frameworks based on weighted pushdown systems~\cite{Reps2005}.
	
	Solving parity pushdown games, that is, deciding the existence of a winning strategy for one of the players, is \exptime-complete~\cite{Walukiewicz2001}.
	A variant of pushdown games has been used to establish \exptime-completeness of the subsumption problem in the description logic \flo with TBoxes~\cite{Hofmann2005}.
	Motivated by this result, we present a new method for solving subsumption problems in several description logics extended with TBoxes.
	
	The remainder of this paper is structured as follows.
	\cref{section:preliminaries} recalls the necessary preliminaries.
	In \cref{section:normal-forms}, we present syntactic laws and normal forms used throughout the paper.
	\cref{section:subsumption-no-TBox} establishes \pspace-completeness of subsumption in \flreg and \flbotreg without TBoxes.
	In \cref{section:reduction}, we present a reduction from \flbotreg with TBoxes to \flreg with TBoxes, allowing us to focus on subsumption in \flreg with TBoxes.
	\cref{section:games} provides the necessary background on parity pushdown games.
	In \cref{section:reduction-to-games}, we prove our main result---\exptime-completeness of subsumption in \flreg with TBoxes---by reduction to parity pushdown games.
	Finally, \cref{section:conclusions} concludes the paper.

	\section{Preliminaries}\label{section:preliminaries}
	
	We introduce the description logics considered in this paper, starting with the most complex one, \flbotreg;
	the other logics are obtained as its syntactic fragments.

	\subsection{The logic \texorpdfstring{\flbotreg}{FL⊥reg}}
	
	Let \names and \roles be disjoint countable sets of \emph{concept names} and \emph{role names}, respectively.
	
	\emph{Regular role expressions} are just usual regular expressions built from role names using union, concatenation, and Kleene star.
	Thus, they are defined by the grammar
	\[
	E \ ::=\ \emptyset \mid \varepsilon \mid r \mid (E + E) \mid (E E) \mid E^{*}\qquad (r \in \roles).
	\]
	Then, \emph{concept descriptions} of $\flbotreg$ are generated by the following grammar:
	\[
	C \ ::=\ A \mid \top \mid \bot \mid (C \sqcap C) \mid \forall E.C \qquad (A \in \names).
	\]
	
	The semantics of \flbotreg is given by \emph{interpretations} of the form $\Ii = (\univ^\Ii, \cdot^\Ii)$, where $\univ^\Ii$ is a nonempty domain of discourse (\emph{universe}),
	and $\cdot^\Ii$ is an interpretation function mapping each concept name $A \in \names$ to a set $A^\Ii \subseteq \univ^\Ii$, and each role name $r \in \roles$ to a binary relation $r^\Ii \subseteq \univ^\Ii \times \univ^\Ii$.
	For a regular role expression, we can define its language $\Ll(E)\subseteq\roles^*$ in the usual way:
	\begin{gather*}
		\Ll(\emptyset)=\emptyset,\qquad
		\Ll(\varepsilon)=\{\varepsilon\},\qquad
		\Ll(r)=\{r\},\qquad
		\Ll(E_1 E_2) = \{w_1w_2\mid w_1\in \Ll(E_1),\ w_2\in \Ll(E_2)\},\\
		\Ll(E_1+E_2) = \Ll(E_1) \cup \Ll(E_2),\qquad
		\Ll(E^{*}) = \{ w_1w_2\dots w_n \mid n\in\Nat,\ w_1,w_2,\dots,w_n\in \Ll(E)\},
	\end{gather*}
	where $\varepsilon$ is the empty word, while $w_1w_2$ and $w_1w_2\dots w_n$ denote concatenations of words.
	Regular role expressions and complex concept descriptions are interpreted as follows, where ``$\circ$'' denotes the composition of binary relations:
	\begin{gather*}
		E^\Ii=\left\{r_1^\Ii\circ r_2^\Ii\circ\dots\circ r_n^\Ii\mid r_1r_2\dots r_n\in \Ll(E)\right\},\\
		\top^\Ii = \univ^\Ii,\qquad
		\bot^\Ii = \emptyset,\qquad
		(C \sqcap D)^\Ii = C^\Ii \cap D^\Ii,\\
		(\forall E.C)^\Ii = \left\{ x \in \univ^\Ii \mid y \in C^\Ii\mbox{ for all }y\mbox{ such that }(x,y) \in E^\Ii\right\}.
	\end{gather*}

	\subsection{Weaker logics}
	
	The logic \flbot is a syntactically limited fragment of \flbotreg, where we are not allowed to use value restrictions $\forall E.C$ for arbitrary regular expressions $E$,
	but only $\forall r.C$ for single role names $r$.
	
	The logic \flreg is obtained by syntactically limiting \flbotreg in another way: we forbid the use of~$\bot$.
	
	In order to obtain the logic \flo, one needs to apply both limitations: value restrictions are limited to $\forall r.C$, and $\bot$ cannot be used.

	\subsection{Subsumption problem}\label{section:subsumption-problem}
	
	The notions of subsumption and a TBox are defined uniformly for all description logics introduced above.
	In each case, concept descriptions are formed according to the given logic, and interpretations assign meaning to the constructors as specified.
	
	\begin{definition}
		A \emph{subsumption} between the concept descriptions $C$ and $D$ is written $C \sqsubseteq D$, and we also say that $C$ is \emph{subsumed} by $D$.
		Such a subsumption holds in an interpretation $\Ii$ if $C^\Ii\subseteq D^\Ii$.
		When no interpretation is specified, $C\sqsubseteq D$ means that $C^\Ii \subseteq D^\Ii$ holds for all interpretations $\Ii$.
		The concept descriptions $C$ and $D$ are \emph{equivalent} (written $C \equiv D$) if $C \sqsubseteq D$ and $D \sqsubseteq C$.
	\end{definition}
	
	A \emph{TBox} $\Tt$ is a finite set of axioms of the form $C \sqsubseteq D$, where $C$ and $D$ are concept descriptions.
	Such axioms express inclusion (implication) between concepts, stating that every instance of $C$ is also an instance of $D$.
	For example, $\texttt{Human} \sqsubseteq \texttt{Animal}$ states that every human is an animal.
	Sometimes one also allows axioms of the form $A \equiv B$, expressing equivalence of two concepts.
	Such axioms can be split into two subsumptions, $A \sqsubseteq B$ and $B \sqsubseteq A$, hence we assume that they are not present.
	
	Subsumption can be relativized to a TBox as follows.
	
	\begin{definition}
		Let $\Tt$ be a TBox and let $C, D$ be concept descriptions.
		\begin{compactitem}
			\item An interpretation $\Ii$ is a \emph{model} of $\Tt$ if all axioms $(E \sqsubseteq F)\in \Tt$ hold in $\Ii$.
			\item $C$ is \emph{subsumed} by $D$ with respect to $\Tt$ (written $C \sqsubseteq_\Tt D$) if $C \sqsubseteq D$ holds in every model $\Ii$ of $\Tt$.
		\end{compactitem}
	\end{definition}
	
	The \emph{subsumption problem} in the considered logic with TBoxes is defined as follows:
	\begin{compactitem}
		\item \textbf{Input:} a TBox $\Tt$ and concept descriptions $C, D$.
		\item \textbf{Output:} \texttt{Yes}, if the subsumption $C \sqsubseteq_\Tt D$ holds, and \texttt{No} otherwise.
	\end{compactitem}
	An instance of this problem is denoted $C \sqsubseteq_\Tt^? D$, with the question mark indicating that the subsumption is the \emph{goal} to be determined.
	We refer to such expressions as \emph{goal subsumptions}.
	
	If the TBox is absent (empty), we talk about the subsumption problem without TBoxes, with instances of the form $C \sqsubseteq^? D$.
	
	Notice that a \emph{set of goal subsumptions} is simply a set of independent subsumption problems, because each goal subsumption is considered independently of the others.

	\section{Syntactic laws and normal forms}\label{section:normal-forms}
	
	Before addressing the subsumption problem, in this section we present some simple syntactic laws that can be used to transform concept descriptions or to reason about subsumptions in our logics.
	
	We start by observing that the equivalence relation is a congruence: if a concept description $C$ syntactically contains a concept description $D$ (as a subterm), and $D\equiv D'$,
	then after replacing $D$ by $D'$ in the definition of $C$ we obtain a concept description $C'$ that is equivalent to $C$ (i.e., $C\equiv C'$).
	
	We now present several equivalences, which are immediate consequences of the semantics.
	First, we note that value restrictions distribute over conjunction:
	\[
	\forall E.\bigl(C_1 \sqcap C_2\bigr) \equiv \forall E.C_1 \sqcap \forall E.C_2.
	\]
	The following two equivalences, used exhaustively as rewrite rules from left to right, allow us to eliminate the $\top$ symbol (except in the case where the entire concept description is $\top$):
	\[
	\forall E.\top \equiv \top,\qquad C \sqcap \top \equiv C.
	\]
	We can also merge nested value restrictions into a single one, or introduce a value restriction when it is not present:
	\[
	\forall E_1.\forall E_2.C \equiv \forall E_1E_2.C, \qquad C\equiv\forall\varepsilon.C.
	\]
	Value restrictions involving the same concept name, as well as those involving $\bot$, may be merged:
	\[
	\forall E_1.C\sqcap\forall E_2.C \equiv \forall(E_1+E_2).C.
	\]
	
	The above transformations allow us to convert every concept description into an equivalent concept description in a normal form defined as follows.
	A concept description is in \emph{normal form} if it is of the form either $\top$, or $\forall E_1. A_1 \sqcap \cdots \sqcap \forall E_n.A_n$, or $\forall E_1. A_1 \sqcap \cdots \sqcap \forall E_n.A_n \sqcap \forall E_{n+1}.\bot$,
	where all $A_1, \dots, A_n$ are pairwise different concept names.
	A concept description of the form $\forall E.C$, where $C$ is either $\bot$ or a concept name from $\names$, is called a \emph{particle}.
	
	We also observe that for any concept name of interest, $A\in\names$ (as well as for $\bot$),
	every concept description $D$ is equivalent to a concept description in normal form containing a particle that involves $A$ (or $\bot$, respectively).
	Indeed, if such a particle is not present in $D$, we can introduce one with the empty regular expression thanks to the equivalence
	\begin{align*}
		\top\equiv\forall\emptyset.C.
	\end{align*}
	
	The next equivalence is used in \cref{section:subsumption-no-TBox}:
	\begin{align}\label{eq:bot-simplify}
		C \sqcap \bot \equiv \bot.
	\end{align}
	
	We now turn to properties of subsumption.
	Clearly subsumption is transitive: for every interpretation $\Ii$, if $C_1\sqsubseteq C_2$ and $C_2\sqsubseteq C_3$ hold in $\Ii$, then $C_1\sqsubseteq C_3$ holds as well.
	Another property is monotonicity of value restrictions: if $C\sqsubseteq D$ holds in an interpretation $\Ii$, then $\forall E.C\sqsubseteq\forall E.D$ as well, for any regular role expression $E$.
	Moreover, the following subsumption holds in every interpretation $\Ii$:
	\begin{align}\label{eq:cap-on-left}
		C\sqcap D\sqsubseteq C.
	\end{align}
	Next, observe that for every interpretation $\Ii$ we have
	\begin{align}\label{eq:cap-on-right}
		C^\Ii\subseteq(D_1\sqcap D_2)^\Ii\quad\Longleftrightarrow\quad C^\Ii\subseteq D_1^\Ii\ \land\ C^\Ii\subseteq D_2^\Ii.
	\end{align}
	Finally, for every interpretation $\Ii$ and every element $x$ of its domain we have the following equivalence (where a word $w\in\roles^*$ is treated as a regular role expression):
	\begin{align}\label{eq:in-restriction}
		x\in(\forall E.C)^\Ii \quad\Longleftrightarrow\quad \big(x\in(\forall w.C)^\Ii\mbox{ for all }w\in \Ll(E)\big).
	\end{align}

	\section{Subsumption in \texorpdfstring{\flreg}{FLreg} and \texorpdfstring{\flbotreg}{FL⊥reg} without TBoxes}\label{section:subsumption-no-TBox}
	
	In this section we solve the subsumption problem in \flreg and in \flbotreg without TBoxes.
	
	\begin{theorem}\label{theorem:no-tbox}
		The following two problems are \pspace-complete:
		\begin{compactitem}
			\item	the subsumption problem in \flreg without TBoxes, and
			\item	the subsumption problem in \flbotreg without TBoxes.
		\end{compactitem}
	\end{theorem}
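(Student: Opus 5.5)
We plan to reduce subsumption to inclusions between regular languages, in the style of the characterization of Baader and K\"usters. Bring $C$ and $D$ into normal form using the equivalences of \cref{section:normal-forms}, padding with particles $\forall\emptyset.A$ and $\forall\emptyset.\bot$ where needed. We get $C\equiv\bigsqcap_{A}\forall E_A.A\sqcap\forall E_\bot.\bot$ and $D\equiv\bigsqcap_{A}\forall F_A.A\sqcap\forall F_\bot.\bot$, with $A$ ranging over the concept names occurring in $C$ or $D$. This takes polynomial time, since merging uses only union and concatenation of expressions. By \cref{eq:cap-on-right}, $C\sqsubseteq D$ holds iff $C\sqsubseteq\forall F_A.A$ for every $A$ and $C\sqsubseteq\forall F_\bot.\bot$.

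\textbf{Characterization.} Let $N=\Ll(E_\bot)\cdot\roles^*$; restricted to the finitely many roles occurring, this is regular. We claim that $C\sqsubseteq\forall F_A.A$ iff $\Ll(F_A)\subseteq\Ll(E_A)\cup N$, and that $C\sqsubseteq\forall F_\bot.\bot$ iff $\Ll(F_\bot)\subseteq N$. For \flreg we have $E_\bot=\emptyset$, so $N=\emptyset$ and the conditions become plain inclusions.

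The ``if'' direction is routine. Take $x\in C^\Ii$ and $w\in\Ll(F_A)$, and use \cref{eq:in-restriction}. If $w\in\Ll(E_A)$, every $w$-successor of $x$ lies in $A^\Ii$. If instead $w=uv$ with $u\in\Ll(E_\bot)$, then $x$ has no $u$-successor, so $x$ has no $w$-successor and the requirement is vacuous.

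For ``only if'' we build a countermodel. Suppose $w=r_1\dots r_n\in\Ll(F_A)$ violates the condition. Take the path interpretation with elements $x_0,\dots,x_n$ and $r_i^\Ii=\{(x_{i-1},x_i)\}$. Put $x_i\in B^\Ii$ iff $r_{i+1}\dots r_n\notin\Ll(E_B)$ reversed appropriately; more precisely, put $x_i\in B^\Ii$ unless the prefix $r_1\dots r_i$ lies in $\Ll(E_B)$... we must make $x_0\in C^\Ii$. So we set $x_i\in B^\Ii$ for all $B$ and $i$, except $x_n\notin A^\Ii$. Then $x_0\in C^\Ii$ requires two things. First, no prefix of $w$ may lie in $\Ll(E_\bot)$; this holds because $w\notin N$. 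Second, $w\notin\Ll(E_A)$, which is exactly the assumption. Meanwhile $x_0\notin(\forall F_A.A)^\Ii$. The $\bot$ case is the same argument with every $x_i$ in every concept name. Writing down this correctness argument carefully is the main step, though it is short.

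\textbf{Upper bound.} Inclusion of a regular expression language in a union of regular languages, with $N$ given as $E_\bot\cdot(r_1+\dots+r_k)^*$, is decidable in \pspace. We convert the expressions to NFAs in polynomial time. Then we guess a counterexample word letter by letter, tracking the state subsets of the NFAs on the fly, which uses polynomial space; Savitch's theorem gives NPSPACE $=$ \pspace.

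\textbf{Lower bound.} It suffices to show hardness for \flreg, since \flreg is a fragment of \flbotreg. Reduce regular expression universality, or more generally inclusion, which is \pspace-complete. By the characterization with $E_\bot=\emptyset$, $\Ll(F)\subseteq\Ll(E)$ holds iff $\forall E.A\sqsubseteq\forall F.A$.
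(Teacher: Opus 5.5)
Your proof is correct and follows essentially the paper's route. You use the same characterization $\Ll(G)\subseteq\Ll(E+F\SigmaR^*)$ (resp.\ $\Ll(G)\subseteq\Ll(F\SigmaR^*)$) and then regular-expression inclusion for both bounds; the only differences are cosmetic: a single path along the violating word replaces the paper's tree interpretation on $\roles^*\setminus\Ll(F\SigmaR^*)$, and the paper's separate removal of $C_0$ (making all other concept names true everywhere) is folded into that countermodel. When writing it up, delete the abandoned attempts in the countermodel paragraph, and define $r^\Ii=\{(x_{i-1},x_i)\mid r_i=r\}$, since $r_i^\Ii=\{(x_{i-1},x_i)\}$ is ill-defined when a role name repeats in $w$.
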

	
	Consider an instance $C \sqsubseteq^? D$ of the subsumption problem.
	As already mentioned, we may assume that $C$ and $D$ are in normal form; in particular these are conjunctions of particles.
	
	We first note that a subsumption $C \sqsubseteq D_1\sqcap\dots\sqcap D_k$ holds if and only if the subsumptions $C \sqsubseteq D_i$ hold for all $i\in\{1,\dots,k\}$ (as follows from \cref{eq:cap-on-right}).
	By considering each of the latter subsumptions separately, we may assume that the right side of our goal subsumption consists of a single particle.
	
	After the above step, the considered instance of the subsumption problem in \flbotreg is of the form either $C \sqsubseteq^? \forall G.A$, where $A$ is a concept name, or $C \sqsubseteq^? \forall G.\bot$;
	the left side $C$ is a conjunction of particles involving different concept names.
	Moreover, as explained in \cref{section:normal-forms}, we may assume that $C$ contains a particle involving $\bot$, as well as a particle involving $A$ (in the case when the right side is $\forall G.A$ for $A\in\names$).
	In other words, the instance of the problem, after preparatory syntactic transformations, is of the form either
	\begin{compactitem}
		\item $\forall E.A\sqcap\forall F.\bot\sqcap C_0\sqsubseteq^?\forall G.A$, where $C_0$ does not use $\bot$ nor the concept name $A$, or
		\item $\forall F.\bot\sqcap C_0\sqsubseteq^?\forall G.\bot$, where $C_0$ does not use $\bot$.
	\end{compactitem}
	
	As a next simplification, we observe that the part $C_0$ can be removed from these subsumptions, as stated in the following lemma, whose proof can be found in  \cref{appendix:subsumption-no-TBox}.
	
	\begin{restatable}{lemma}{lemmaNoTBox}\label{lemma:no-TBox}
		For subsumptions as above we have equivalences:
		\begin{compactitem}
			\item $\forall E.A\sqcap\forall F.\bot\sqcap C_0\sqsubseteq\forall G.A$ holds if and only if $\forall E.A\sqcap\forall F.\bot\sqsubseteq\forall G.A$ holds;
			\item $\forall F.\bot\sqcap C_0\sqsubseteq\forall G.\bot$ holds if and only if $\forall F.\bot\sqsubseteq\forall G.\bot$ holds.
		\end{compactitem}
	\end{restatable}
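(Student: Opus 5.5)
One direction of each equivalence is immediate. By \cref{eq:cap-on-left}, adding the conjunct $C_0$ to the left side can only shrink its interpretation. So if $\forall E.A\sqcap\forall F.\bot\sqsubseteq\forall G.A$ holds, then by transitivity $\forall E.A\sqcap\forall F.\bot\sqcap C_0\sqsubseteq\forall G.A$ holds as well. The same argument works for the $\bot$-case. The work therefore lies in the converse directions, which I would prove by contraposition: from a counter-interpretation for the smaller subsumption, I would build one for the larger subsumption.

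For the first item, suppose $\Ii$ and $x\in\univ^\Ii$ satisfy $x\in(\forall E.A\sqcap\forall F.\bot)^\Ii$ but $x\notin(\forall G.A)^\Ii$. I would define a new interpretation $\Jj$ with the same domain and the same role interpretations as $\Ii$. It sets $A^\Jj=A^\Ii$ and $B^\Jj=\univ^\Ii$ for every concept name $B\neq A$. Membership in $\forall E.A$, $\forall F.\bot$ and $\forall G.A$ depends only on the roles and on $A^\Ii$, so these memberships of $x$ are unchanged in $\Jj$. Next, $C_0$ is a conjunction of particles $\forall E_i.B_i$ with $B_i\neq A$ and $B_i\neq\bot$, because $C_0$ uses neither $\bot$ nor $A$. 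Each such particle is interpreted in $\Jj$ as the whole domain, so $x\in C_0^\Jj$. Hence $\Jj$ refutes the larger subsumption.

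The second item follows in the same way. Again every concept name is interpreted as the whole domain, and $C_0$ contains no $\bot$, so $C_0^\Jj=\univ^\Jj$. The memberships in $\forall F.\bot$ and $\forall G.\bot$ depend only on the roles, so they are preserved.

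The only point that needs care is the claim that $C_0^\Jj=\univ^\Jj$. I would justify it by a short induction on $\bot$-free concept descriptions not mentioning $A$: if all concept names in $C$ are interpreted as the full domain, then $C^\Jj=\univ^\Jj$. The base cases are $\top$ and such concept names. The inductive cases hold because both conjunction and $\forall E.\cdot$ preserve the full domain. Alternatively, I would use the normal form of $C_0$ directly. This step is routine, so I expect no real obstacle.
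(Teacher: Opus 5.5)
Your proposal is correct and follows essentially the same route as the paper. The right-to-left directions go via \cref{eq:cap-on-left} and transitivity, and the left-to-right directions modify an interpretation so that every concept name other than $A$ (respectively, every concept name) is interpreted as the whole domain, which makes $C_0$ the whole domain while leaving the remaining particles unchanged; the only cosmetic difference is that you phrase this direction by contraposition, whereas the paper argues directly.
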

	
	Let $\SigmaR\subseteq\roles$ contain all role names occurring in the considered concept descriptions.
	It is a finite set, hence disjunction of its elements, written simply $\SigmaR$, can be used as a regular role expression.%
	\footnote{The only reason for using $\SigmaR$ instead of $\roles$ is that $\roles$ is infinite, hence it cannot be legally used in a regular expression.}
	The next lemma (with a proof in   \cref{appendix:subsumption-no-TBox}) reduces the subsumption problem to the containment problem for languages of regular expressions.
	
	\begin{restatable}{lemma}{lemmaSubsumptionLanguage}\label{lemma:subsumption2language}
		We have the following equivalences:
		\begin{compactitem}
			\item the subsumption $\forall E.A \sqcap \forall F.\bot \sqsubseteq \forall G.A$ holds if and only if $\Ll(G)\subseteq \Ll(E+F\SigmaR^*)$;
			\item the subsumption $\forall F.\bot \sqsubseteq^? \forall G.\bot$ holds if and only if $\Ll(G) \subseteq \Ll(F\SigmaR^*)$.
		\end{compactitem}
	\end{restatable}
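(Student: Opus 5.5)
We prove both items by the same scheme: one direction uses the syntactic laws of \cref{section:normal-forms}, the other uses a canonical tree-shaped countermodel built from a single word. We treat the first item; the second is the special case with no particle for $A$ (equivalently $E=\emptyset$ and $\bot$ playing the role of $A$).

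\emph{($\Leftarrow$)} Assume $\Ll(G)\subseteq\Ll(E+F\SigmaR^*)$. Fix an interpretation $\Ii$, an element $x\in(\forall E.A\sqcap\forall F.\bot)^\Ii$, and a word $w\in\Ll(G)$. By \cref{eq:in-restriction} it suffices to show $x\in(\forall w.A)^\Ii$. If $w\in\Ll(E)$, this follows from $x\in(\forall E.A)^\Ii$ and \cref{eq:in-restriction}. Otherwise $w=uv$ with $u\in\Ll(F)$. Since $x\in(\forall u.\bot)^\Ii$, the element $x$ has no $u$-successor, and hence no $uv$-successor. So $x\in(\forall w.C)^\Ii$ holds vacuously for every $C$, in particular for $C=A$. (If $w$ contains a role outside $\SigmaR$, it cannot lie in $\Ll(G)$, so no such case arises.)

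\emph{($\Rightarrow$)} We argue by contraposition. Take $w=r_1\cdots r_n\in\Ll(G)$ with $w\notin\Ll(E)$ and $w\notin\Ll(F\SigmaR^*)$; the latter means no prefix of $w$ lies in $\Ll(F)$. Build the path interpretation $\Ii$ with domain $\{d_0,\dots,d_n\}$, where $r^\Ii=\{(d_{i-1},d_i)\mid r_i=r\}$, $A^\Ii=\univ^\Ii\setminus\{d_n\}$, and all other concept names are interpreted as everything (these play no role). We need to check three facts.
\begin{compactitem}
\item The words $v$ with $(d_0,d_j)\in v^\Ii$ are exactly the prefixes $r_1\cdots r_j$. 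This is because each element $d_{i-1}$ has exactly one outgoing edge, and its label is $r_i$.
\item $d_0\in(\forall F.\bot)^\Ii$, since no prefix of $w$ lies in $\Ll(F)$.
\item $d_0\in(\forall E.A)^\Ii$, because the only successor of $d_0$ outside $A^\Ii$ is $d_n$, reached only via $w$, and $w\notin\Ll(E)$.
\end{compactitem}
On the other hand, $d_0\notin(\forall G.A)^\Ii$ because $(d_0,d_n)\in w^\Ii$ and $d_n\notin A^\Ii$. This gives the required countermodel.

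For the second item, the same path model with $w\in\Ll(G)\setminus\Ll(F\SigmaR^*)$ satisfies $\forall F.\bot$ at $d_0$. Yet $d_0$ has a $G$-successor $d_n$, which violates $\forall G.\bot$. The converse direction is the vacuity argument given above.

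The only delicate point is the first fact, namely that in the path model the words reaching $d_j$ from $d_0$ are precisely prefixes of $w$. This is what turns semantic truth of the particles at $d_0$ into purely language-theoretic conditions. It holds because the path is deterministic and the $r^\Ii$ are defined exactly by the letters of $w$.
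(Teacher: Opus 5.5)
Your proof is correct, but in both directions it takes a somewhat different route from the paper.

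\emph{The ($\Leftarrow$) direction.} The paper argues purely syntactically. From $\Ll(G)\subseteq\Ll(E+F\SigmaR^*)$ it obtains $\forall(E+F\SigmaR^*).A\sqsubseteq\forall G.A$ and adds the conjunct $\forall F.\bot$. It then distributes to $\forall E.A\sqcap\forall F.(\forall\SigmaR^*.A\sqcap\bot)$ and collapses the inner conjunction using $C\sqcap\bot\equiv\bot$. Your semantic argument (no $u$-successor, hence no $uv$-successor) is exactly the content of those rewriting steps, stated directly.

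\emph{The ($\Rightarrow$) direction.} The paper does not use contraposition with a countermodel built from one word. It builds a single canonical infinite interpretation with universe $\roles^*\setminus\Ll(F\SigmaR^*)$, successor relations $r^\Ii=\{(u,ur)\mid u,ur\in\univ^\Ii\}$, and $A^\Ii=\Ll(E)\cap\univ^\Ii$. It checks that $\varepsilon$ satisfies the left-hand side, and then reads off $\Ll(G)\cap\univ^\Ii\subseteq\Ll(E)\cap\univ^\Ii$ for all words at once. Your finite path model is essentially the branch of that tree along the witness word $w$, with $A$ false only at the endpoint.

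\emph{Comparison.} The paper's construction yields the inclusion uniformly, and it mirrors the canonical word model reused later in the game reduction. Yours is more elementary and gives finite countermodels. It also avoids the degenerate case $\varepsilon\in\Ll(F)$ for free. In that case the paper's universe does not contain $\varepsilon$, so its argument silently needs the trivial observation $\Ll(G)\subseteq\SigmaR^*\subseteq\Ll(F\SigmaR^*)$. In your contrapositive setting, a witness $w$ with no prefix in $\Ll(F)$ already forces $\varepsilon\notin\Ll(F)$.

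One small step deserves a word: reading ``$w\notin\Ll(F\SigmaR^*)$'' as ``no prefix of $w$ lies in $\Ll(F)$'' uses $w\in\SigmaR^*$. This holds because $w\in\Ll(G)$ and $G$ only uses role names from $\SigmaR$.
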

	
	We have thus reduced the subsumption problem in \flbotreg without TBoxes (and hence also in \flreg without TBoxes, which is a special case) to the problem of deciding inclusion between two languages given by regular expressions.
	It is known that the latter problem is \pspace-complete~\cite{Hunt1976}.
	Thus, our subsumption problems are in \pspace.
	
	In order to show \pspace-hardness, we reduce in the opposite direction.
	Given two arbitrary regular expressions $G$, $E$ for which we want to check whether language inclusion holds, we treat their letters as elements of $\roles$.
	We fix some concept name $A$, and we define $\SigmaR$ to contain all letters from $E$ and $G$.
	Taking $F=\emptyset$ we have $\Ll(E)=\Ll(E+F\SigmaR^*)$ as well as $\forall E.A\equiv\forall E.A\sqcap\forall F.\bot$.
	From \cref{lemma:subsumption2language} it thus follows that the inclusion $\Ll(G)\subseteq\Ll(E)$ (equivalently: $\Ll(G)\subseteq\Ll(E+F\SigmaR^*)$)
	is equivalent to the subsumption $\forall E.A\sqsubseteq\forall G.A$ (equivalently: $\forall E.A\sqcap\forall F.\bot\sqsubseteq\forall G.A$).
	Note that the resulting subsumption $\forall E.A\sqsubseteq\forall G.A$ does not use $\bot$, hence it belongs to both logics \flreg and \flbotreg.
	This concludes our proof of \cref{theorem:no-tbox}.
	
	We note that the correspondence between problems concerning \flreg  and language inclusion is well known and has been used, for example,
	in the context of the matching problem~\cite[Lemma~6]{Baader1998} and the unification problem~\cite[Lemma~3]{Baader2002};
	some of the arguments from the cited papers could be adapted to the present setting and reused.

	\section{Reduction from \texorpdfstring{\flbotreg}{FL⊥reg} with TBoxes to \texorpdfstring{\flreg}{FLreg} with TBoxes}\label{section:reduction}
	
	In this section, we present a reduction from the subsumption problem in \flbotreg with TBoxes to the subsumption problem in \flreg with TBoxes.
	Our construction is inspired by a reduction in Kost and Morawska~\cite{Morawska2025};
	however, their result is obtained in a different setting, namely unification, and considers a transformation from \flbot \emph{without TBoxes} to \flo with TBoxes.
	Thus, in contrast to their work, we deal with subsumption (rather than unification), include TBoxes in the source problem, and target \flreg instead of \flo.
	
	A further related approach is presented in Baader et al.~\cite{Baader2022},
	where a different reduction is used to transform a subsumption problem in \flbot with TBoxes into one in \flo with TBoxes
	(their approach requires so-called flattening of the TBox, which does not seem to work in the presence of regular expressions).
	It follows that the subsumption problem in \flbot with TBoxes belongs to \exptime.
	Our result in this paper is more general, since it concerns the more expressive logics \flbotreg and \flreg.
	
	The idea of the reduction presented below (as well as of the earlier reductions) is to simulate the constructor $\bot$ by a fresh concept name~$B$, axiomatized in a TBox.
	
	Throughout this section we consider an instance $C \sqsubseteq^?_\Tt D$ of the subsumption problem in \flbotreg with TBoxes.
	All $C$, $D$, and $\Tt$ may contain occurrences of $\bot$.
	Let $\SigmaC\subseteq\names$ and $\SigmaR\subseteq\roles$ be the sets of, respectively, all concept names and all role names occurring in $C$, $D$, or in the TBox $\Tt$.
	We fix a fresh concept name $B$, which does not occur anywhere in $C$, $D$, or $\Tt$ (i.e., $B\notin\SigmaC$).
	Then, we define:
	\begin{compactitem}
		\item $K^B$ is $K$ with all occurrences of $\bot$ replaced by $B$, for any concept description $K$;
		\item $\Tt^B$ is the TBox $\Tt$ with all occurrences of $\bot$ replaced by $B$.
	\end{compactitem}
	
	Next, we define a TBox $\Rr$, which simulates properties of $\bot$.
	It contains the following axioms:
	\begin{compactitem}
		\item $B \sqsubseteq A$, for every concept name $A \in \SigmaC$;
		\item $B \sqsubseteq \forall r.B$, for every role name $r \in \SigmaR$.
	\end{compactitem}
	
	Note that $C^B$, $D^B$, and concept descriptions in $\Tt^B \cup \Rr$ do not use the $\bot$ symbol, hence belong to the \flreg logic.
	Thus, the following theorem gives us a reduction for the subsumption problem from \flbotreg with TBoxes to \flreg with TBoxes.

	\begin{theorem}\label{theorem:fromFLBOTREGtoFLREG}
		The subsumption $C \sqsubseteq_\Tt D$ from \flbotreg holds if and only if the subsumption $C^B \sqsubseteq_{\Tt^B \cup \Rr} D^B$ from \flreg holds.
	\end{theorem}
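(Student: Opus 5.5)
Both directions go by transforming countermodels, since each side of the equivalence says ``every model of the respective TBox satisfies the subsumption.''

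\emph{From \flreg to \flbotreg (``if'').} Assume $C^B \sqsubseteq_{\Tt^B\cup\Rr} D^B$ and let $\Ii$ be a model of $\Tt$. Extend $\Ii$ to $\Ii'$ by setting $B^{\Ii'}=\emptyset$, keeping everything else unchanged. The axioms of $\Rr$ hold trivially in $\Ii'$ because their left side is empty. A routine induction on concept descriptions shows $(K^B)^{\Ii'}=K^\Ii$ for every $K$, since $B$ and $\bot$ are interpreted identically. Hence $\Ii'$ is a model of $\Tt^B$, so $C^\Ii=(C^B)^{\Ii'}\subseteq(D^B)^{\Ii'}=D^\Ii$.

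\emph{From \flbotreg to \flreg (``only if'').} Take a model $\Jj$ of $\Tt^B\cup\Rr$ and $x\in(C^B)^\Jj$; we must show $x\in(D^B)^\Jj$. Define $\Ii$ by deleting $B^\Jj$: the universe is $\univ^\Jj\setminus B^\Jj$, and concept names and roles are restricted to it. If the restricted universe would be empty, or if $x\in B^\Jj$, we handle it separately (see below). The key claim, proved by induction on $K$ (with $\bot$ allowed in $K$), is
\[
(K)^\Ii = (K^B)^\Jj\setminus B^\Jj .
\]
\begin{compactitem}
\item For $K=\bot$: the left side is $\emptyset$, and the right side is $B^\Jj\setminus B^\Jj=\emptyset$.
\item For $K=A$: both sides equal $A^\Jj\setminus B^\Jj$.
\item For $\top$ and $\sqcap$: immediate.
\item For $\forall E.K'$, this is the real work, which I expect to be the main obstacle. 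Take $y\notin B^\Jj$. If $y\in(\forall E.K'^B)^\Jj$, then every $E$-path in $\Ii$ from $y$ is also a path in $\Jj$, and its endpoint lies in $K'^{B\,\Jj}\setminus B^\Jj=K'^\Ii$. Conversely, suppose $y\in(\forall E.K')^\Ii$ and consider an $E$-path in $\Jj$ from $y$ to $z$. If the path stays outside $B^\Jj$, it is an $\Ii$-path, so $z\in K'^\Ii\subseteq K'^{B\,\Jj}$. Otherwise the path enters $B^\Jj$ at some point. By the axioms $B\sqsubseteq\forall r.B$ for $r\in\SigmaR$, and because only roles from $\SigmaR$ occur in $E$, the endpoint $z$ is then in $B^\Jj$.
\end{compactitem}

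So the induction needs the auxiliary lemma $B^\Jj\subseteq (K^B)^\Jj$ for every $K$ over $\SigmaC,\SigmaR$. This is itself proved by induction, using $B\sqsubseteq A$ for $A\in\SigmaC$, $B\sqsubseteq B$, closure of $B^\Jj$ under $\SigmaR$-successors for the $\forall$ case, and the trivial $\top$ and $\sqcap$ cases. With the lemma, $z\in K'^{B\,\Jj}$, which completes the $\forall$ case.

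Having the claim, we transfer the TBox. For each axiom $K\sqsubseteq L$ in $\Tt$, intersect both sides of $K^B\sqsubseteq L^B$ in $\Jj$ with the complement of $B^\Jj$; this shows $\Ii$ satisfies $K\sqsubseteq L$, so $\Ii$ is a model of $\Tt$. Now the two cases for $x$:
\begin{compactitem}
\item If $x\notin B^\Jj$, then $x\in C^\Ii\subseteq D^\Ii$, hence $x\in(D^B)^\Jj$.
\item If $x\in B^\Jj$, the auxiliary lemma gives $x\in(D^B)^\Jj$ directly.
\end{compactitem}
The empty-universe case only arises when $B^\Jj$ is the whole domain, and then every $x$ falls into the second case. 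This settles the ``only if'' direction and the theorem.
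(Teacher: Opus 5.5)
Your proof is correct and follows the paper's argument. It uses the same two model transformations (interpreting $B$ as $\emptyset$, and deleting $B^\Jj$ from $\Jj$), the same invariant $K^\Ii=(K^B)^\Jj\setminus B^\Jj$, the same auxiliary fact $B^\Jj\subseteq(K^B)^\Jj$ derived from $\Rr$, and the same case split on whether $x\in B^\Jj$. The only difference is presentational: you handle the value-restriction cases by reasoning directly about $E$-paths, whereas the paper first reduces to single words via \cref{eq:in-restriction} and then inducts on word length (\cref{lemma:B-word} and \cref{lemma:B4}).
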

	
	We now sketch main ideas of a proof of \cref{theorem:fromFLBOTREGtoFLREG}; missing details are contained in \cref{appendix:bot}.
	Throughout this proof, letter $\Ii$ denotes interpretations used for original concept descriptions in \flbotreg,
	and letter $\Jj$ interpretations used for concept descriptions in \flreg, obtained from those in \flbotreg by replacing $\bot$ with the concept name $B$.
	
	For the right-to-left implication, consider a model $\Ii$ of $\Tt$ in which we want to check whether $C\sqsubseteq D$ holds.
	As $\Jj$ we take a modification of $\Ii$ in which $B$ is interpreted as $\emptyset$, so in the same way as $\bot$.
	It is then easy to see that $(K^B)^\Jj=K^\Ii$ for every concept description $K$ (possibly using $\bot$, but not the concept name~$B$),
	as described by the following \lcnamecref{lemma:intFLBOTREGtointFLREG}.
	
	\begin{restatable}{lemma}{lemmaintFLBOTREGtointFLREG}\label{lemma:intFLBOTREGtointFLREG}
		Let $\Ii$ and $\Jj$ be interpretations such that
		\begin{compactitem}
			\item $\univ^\Jj = \univ^\Ii$, 
			\item $A^\Jj = A^\Ii$ for each $A \in \names\setminus\{B\}$, 
			\item $r^\Jj = r^\Ii$ for each $r \in \roles$, and 
			\item $B^\Jj = \emptyset$.
		\end{compactitem}
		Then, for every concept description $K$ in \flbotreg not using the concept name $B$ we have $(K^B)^\Jj = K^\Ii$.
	\end{restatable}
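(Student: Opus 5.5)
The plan is a structural induction on the concept description $K$ in \flbotreg, following the grammar $C ::= A \mid \top \mid \bot \mid (C\sqcap C)\mid \forall E.C$. Before the induction I will record one auxiliary fact: $E^\Jj = E^\Ii$ for every regular role expression $E$. This holds because $E^\Ii$ is defined only from $\Ll(E)$ and the relations $r^\Ii$ for $r\in\roles$. The hypotheses give $r^\Jj=r^\Ii$ for all role names, so every composition $r_1^\Jj\circ\dots\circ r_n^\Jj$ with $r_1\dots r_n\in\Ll(E)$ equals $r_1^\Ii\circ\dots\circ r_n^\Ii$. Since $\univ^\Jj=\univ^\Ii$, the base cases are also unaffected by the domain.

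The base cases are as follows.
\begin{compactitem}
\item For a concept name $A$: since $K$ does not use $B$, we have $A\neq B$. Then $A^B=A$ and $(A^B)^\Jj=A^\Jj=A^\Ii$ by the second hypothesis.
\item For $\top$: $\top^B=\top$, and $\top^\Jj=\univ^\Jj=\univ^\Ii=\top^\Ii$.
\item For $\bot$: $\bot^B=B$, and $B^\Jj=\emptyset=\bot^\Ii$ by the fourth hypothesis. This is the only place where the replacement actually matters.
\end{compactitem}

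The inductive steps use the fact that the replacement $\cdot^B$ commutes with the constructors: $(K_1\sqcap K_2)^B=K_1^B\sqcap K_2^B$ and $(\forall E.K')^B=\forall E.(K')^B$. The regular expression $E$ is untouched, since $\bot$ does not occur in role expressions.
\begin{compactitem}
\item For conjunction, the induction hypothesis gives $(K_1^B\sqcap K_2^B)^\Jj=(K_1^B)^\Jj\cap(K_2^B)^\Jj=K_1^\Ii\cap K_2^\Ii=(K_1\sqcap K_2)^\Ii$.
\item For value restriction, $(\forall E.(K')^B)^\Jj$ is the set of $x\in\univ^\Jj$ such that every $y$ with $(x,y)\in E^\Jj$ lies in $((K')^B)^\Jj$. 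Using $\univ^\Jj=\univ^\Ii$, the auxiliary fact $E^\Jj=E^\Ii$, and the induction hypothesis $((K')^B)^\Jj=(K')^\Ii$, this set is exactly $(\forall E.K')^\Ii$.
\end{compactitem}
In both inductive steps the subterms do not use $B$, so the induction hypothesis applies to them.

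There is no real obstacle in this argument. The only point needing care is the auxiliary claim about role expressions, and it is immediate because role interpretations are identical and the semantics of $E$ depends only on them. Everything else is a routine unfolding of the semantic clauses from the preliminaries.
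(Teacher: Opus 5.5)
Your proposal is correct and follows the paper's proof essentially step for step. Both argue by structural induction on $K$, with base cases $A\neq B$, $\top$ and $\bot\mapsto B$, and inductive cases for $\sqcap$ and $\forall E.M$, where the latter uses $E^\Jj=E^\Ii$ because all role names are interpreted identically and for $\varepsilon\in\Ll(E)$ the relation is the identity on the common universe.
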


	It follows that $\Jj$ is a model of $\Tt^B\cup\Rr$, so the subsumption $C^B\sqsubseteq D^B$ holds there by assumption, which implies that $C\sqsubseteq D$ holds in $\Ii$.

	The left-to-right implication is a bit more involved.
	This time we start with a model $\Jj$ of $\Tt^B\cup\Rr$ in which we want to check whether $C^B\sqsubseteq D^B$ holds.
	Note that $\Jj$ may interpret $B$ as a nonempty set.
	First, we observe that the axioms from $\Rr$ imply the following \lcnamecref{lemma:properties-of-B}.
	
	\begin{restatable}{lemma}{lemmaPropertiesOfB}\label{lemma:properties-of-B}
		For every \flreg concept description $K$ involving only concept names from $\SigmaC\cup\{B\}$ and role names from $\SigmaR$, the subsumption $B \sqsubseteq K$ holds in every model $\Jj$ of $\Rr$.
	\end{restatable}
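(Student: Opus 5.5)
We argue semantically: fix a model $\Jj$ of $\Rr$ and an element $x\in B^\Jj$, and show $x\in K^\Jj$ by structural induction on $K$. It is convenient first to establish an auxiliary invariant. \emph{Claim:} for every word $w\in\SigmaR^*$ and every $y$ with $(x,y)\in w^\Jj$ we have $y\in B^\Jj$. This goes by induction on the length of $w$. For $w=\varepsilon$ we have $y=x$. For $w=w'r$ with $r\in\SigmaR$, let $z$ be the intermediate element with $(x,z)\in w'^\Jj$ and $(z,y)\in r^\Jj$. By the induction hypothesis $z\in B^\Jj$. The axiom $B\sqsubseteq\forall r.B$ then gives $z\in(\forall r.B)^\Jj$, hence $y\in B^\Jj$. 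In words, $B^\Jj$ is closed under all role successors along roles from $\SigmaR$.

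Next we prove that $B^\Jj\subseteq K^\Jj$ for every concept description $K$ in the allowed signature, by induction on $K$:
\begin{compactitem}
\item If $K=\top$, the inclusion is trivial.
\item If $K=B$, it is trivial as well.
\item If $K=A\in\SigmaC$, it follows from the axiom $B\sqsubseteq A$.
\item If $K=K_1\sqcap K_2$, it follows from the induction hypothesis together with \cref{eq:cap-on-right}.
\item If $K=\forall E.K'$, take $x\in B^\Jj$ and $y$ with $(x,y)\in E^\Jj$.
\end{compactitem}
In the last case, by the definition of $E^\Jj$ there is a word $w=r_1\dots r_n\in\Ll(E)$ with $(x,y)\in w^\Jj$. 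Since $E$ uses only role names from $\SigmaR$, every letter $r_i$ belongs to $\SigmaR$. By the Claim, $y\in B^\Jj$, and by the induction hypothesis for $K'$ we get $y\in K'^\Jj$. Hence $x\in(\forall E.K')^\Jj$. Equivalently, one may invoke \cref{eq:in-restriction} to reduce to single words $w$.

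The only point needing care is the value restriction case, and in particular ensuring that words in $\Ll(E)$ contain only role names from $\SigmaR$, so that the axioms $B\sqsubseteq\forall r.B$ are available for every step. This is guaranteed by the hypothesis on $K$. Regular expressions such as $\emptyset$ are harmless, since they give vacuous restrictions. Note that the hypothesis that $K$ is an \flreg description, so that $\bot$ does not occur, is essential: $\bot$ is the one base case we cannot handle.
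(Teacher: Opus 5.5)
Your proof is correct and follows essentially the same route as the paper. Both use structural induction on $K$, reduce the value-restriction case to single words $w\in\Ll(E)\subseteq\SigmaR^*$, and handle that case by an auxiliary induction on the length of $w$ using the axioms $B\sqsubseteq\forall r.B$. The only difference is cosmetic: your claim states semantically that $B^\Jj$ is closed under $\SigmaR$-successors, whereas the paper's \cref{lemma:B-word} phrases the same fact as ``$B\sqsubseteq M$ implies $B\sqsubseteq\forall w.M$'' via monotonicity and transitivity.
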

	
	If $B^\Jj=\univ^\Jj$, then the subsumption $C^B\sqsubseteq D^B$ holds in $\Jj$ trivially, since \cref{lemma:properties-of-B} yields $B^\Jj\subseteq(D^B)^\Jj$, and thus $(D^B)^\Jj=\univ^\Jj$.
	Otherwise, we define $\Ii$ by removing $B^\Jj$ from the universe of $\Jj$, hence also from the interpretation of every concept name and role name.
	One then has to check that $K^\Ii=(K^B)^\Jj\setminus B^\Jj$ for every concept description $K$ using only concept names from $\SigmaC$ and role names from $\SigmaR$.
	This is described by the following \lcnamecref{lemma:intFLREGtointFLBOTREG}, whose proof is by induction on the structure of $K$, and uses \cref{lemma:properties-of-B} while handling the case of value restrictions.

	\begin{restatable}{lemma}{lemmaintFLREGtointFLBOTREG}\label{lemma:intFLREGtointFLBOTREG}
	Let $\Jj$ be a model of $\Rr$ such that $B^\Jj \neq \univ^\Jj$, and let $\Ii$ be an interpretation such that
	\begin{compactitem}
		\item $\univ^\Ii = \univ^\Jj\setminus B^\Jj$,
		\item $A^\Ii = A^\Jj\setminus B^\Jj$ for each $A \in \names$, and
		\item $r^\Ii = \{(x,y)\in r^\Jj\mid x\notin B^\Jj\land y\notin B^\Jj\}$ for each $r \in \roles$.
	\end{compactitem}
	Then, for every concept description $K$ in $\flbotreg$ involving only concept names from \SigmaC and role names from \SigmaR, we have
	\[
	K^\Ii = (K^B)^\Jj\setminus B^\Jj.
	\]
	\end{restatable}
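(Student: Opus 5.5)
We prove the equality $K^\Ii = (K^B)^\Jj\setminus B^\Jj$ by structural induction on $K$. The main tool is \cref{lemma:properties-of-B}. It gives $B^\Jj\subseteq (L^B)^\Jj$ for every concept description $L$ over $\SigmaC\cup\{B\}$ and $\SigmaR$, and in particular $B^\Jj\subseteq (K^B)^\Jj$. We also use that $\Jj\models\Rr$ makes $B^\Jj$ closed under $r^\Jj$-successors for $r\in\SigmaR$.

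The base cases are direct.
\begin{compactitem}
\item For $K=A\in\SigmaC$ we have $K^B=A$, and $A^\Ii=A^\Jj\setminus B^\Jj$ holds by definition.
\item For $K=\top$ both sides equal $\univ^\Jj\setminus B^\Jj=\univ^\Ii$.
\item For $K=\bot$ we have $K^B=B$, so the right side is $B^\Jj\setminus B^\Jj=\emptyset=\bot^\Ii$.
\end{compactitem}
Conjunction follows because set difference with $B^\Jj$ distributes over intersection: $(K_1\sqcap K_2)^\Ii=K_1^\Ii\cap K_2^\Ii$, which equals $((K_1^B)^\Jj\cap(K_2^B)^\Jj)\setminus B^\Jj$ by the induction hypothesis.

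The main case is $K=\forall E.L$, where $E$ uses only roles from $\SigmaR$. First we relate paths. For a word $w=r_1\dots r_n$ with $r_i\in\SigmaR$ and $x\notin B^\Jj$, we claim that $(x,y)\in w^\Ii$ if and only if $(x,y)\in w^\Jj$ and $y\notin B^\Jj$.

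\emph{Left to right.} Every $\Ii$-edge is a $\Jj$-edge between non-$B$ elements.

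\emph{Right to left.} Take a $\Jj$-path $x=z_0,\dots,z_n=y$. If some $z_i\in B^\Jj$, then the axioms $B\sqsubseteq\forall r.B$ in $\Rr$, applied along the remaining edges, force $z_n=y\in B^\Jj$, a contradiction. So all $z_i\notin B^\Jj$, and the path is an $\Ii$-path.

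Now let $x\in\univ^\Ii$. We show that $x\in(\forall E.L)^\Ii$ if and only if $x\in(\forall E.L^B)^\Jj$.

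\emph{From $\Ii$ to $\Jj$.} Take any $(x,y)\in E^\Jj$. If $y\in B^\Jj$, then $y\in(L^B)^\Jj$ by \cref{lemma:properties-of-B}. Otherwise $(x,y)\in E^\Ii$ by the path claim, so $y\in L^\Ii\subseteq(L^B)^\Jj$ by the induction hypothesis.

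\emph{From $\Jj$ to $\Ii$.} Take any $(x,y)\in E^\Ii$. Then $(x,y)\in E^\Jj$ and $y\notin B^\Jj$, so $y\in(L^B)^\Jj\setminus B^\Jj=L^\Ii$ by the induction hypothesis.

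This uses \cref{eq:in-restriction} to reduce $E$ to individual words $w\in\Ll(E)$. It also uses that $\Ll(E)\subseteq\SigmaR^*$, since the axioms $B\sqsubseteq\forall r.B$ exist only for $r\in\SigmaR$.

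The main obstacle is the path-closure step: a $\Jj$-path may pass through $B^\Jj$, and we must show it cannot return to the complement. The axioms of $\Rr$ are exactly what rule this out.
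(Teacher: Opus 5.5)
Your proof is correct. Its skeleton matches the paper's: structural induction, with the same base cases and the same conjunction case. The difference is confined to the value-restriction case.

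The paper reduces $\forall E.L$ to single words via \cref{eq:in-restriction}. It then proves the auxiliary concept-level statement \cref{lemma:B4} by induction on the length of $w=rv$. There it peels off one role at a time and compares the one-step conditions for $\forall r.K$ with $K=\forall v.M$. It invokes \cref{lemma:properties-of-B} for each intermediate concept $\forall v.M^B$ to handle successors lying in $B^\Jj$.

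You instead prove a relational path-correspondence claim once: for $x\notin B^\Jj$, the $w^\Ii$-successors of $x$ are exactly its $w^\Jj$-successors outside $B^\Jj$. This uses directly that $B^\Jj$ is closed under $r^\Jj$-successors. After that you need \cref{lemma:properties-of-B} only for the endpoint concept $L^B$.

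What each route buys:
\begin{itemize}
\item Your route isolates the essential phenomenon: $B^\Jj$ is a trap, so a $\Jj$-path ending outside it never touched it. It also avoids an auxiliary induction over concepts $\forall v.M$ that are not subterms of $K$.
\item The paper's route stays closer to the syntactic laws of \cref{section:normal-forms}, such as monotonicity and $\forall rv.M\equiv\forall r.\forall v.M$. It only ever argues about a single edge, in keeping with the paper's aim of reasoning with the constructs of the logic itself.
\end{itemize}

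Two minor remarks. First, you do not actually need \cref{eq:in-restriction}, since $(x,y)\in E^\Jj$ already means $(x,y)\in w^\Jj$ for some $w\in\Ll(E)$. Second, the final set equality follows from your membership equivalence together with $(\forall E.L)^\Ii\subseteq\univ^\Ii=\univ^\Jj\setminus B^\Jj$; this step is worth stating explicitly.
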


	Using \cref{lemma:intFLREGtointFLBOTREG} one can deduce that $\Ii$ is a model of $\Tt$.
	Since, by assumption, the subsumption $C\sqsubseteq_\Tt D$ holds, we obtain in particular $C^\Ii \subseteq D^\Ii$.
	Using \cref{lemma:intFLREGtointFLBOTREG} again, for $K=C$ and for $K=D$, one can then deduce that $(C^B)^\Jj\setminus B^\Jj \subseteq (D^B)^\Jj\setminus B^\Jj$.
	Taking the union with the inclusion $B^\Jj\subseteq (D^B)^\Jj$ resulting from \cref{lemma:properties-of-B}, this yields $(C^B)^\Jj \subseteq (D^B)^\Jj$.
	Hence $C^B\sqsubseteq D^B$ holds in $\Jj$.
	
	We have shown that the subsumption problem in \flbotreg with TBoxes reduces to that in \flreg with TBoxes.
	In the next part, we show how to decide the latter by reduction to the problem of solving parity pushdown games.

	\section{Parity pushdown games}\label{section:games}
	
	A parity pushdown game is an infinite-duration two-player game played on the configuration graph of a pushdown system.
	Positions of the game consist of a control state together with a stack content, and are partitioned between the two players, called Adam and Eve.
	Starting from a chosen configuration, the players move a token along transitions induced by the pushdown system, thereby generating an infinite play.
	The winner is determined by a parity condition on the sequence of control states visited during the play.
	
	Formally, a \emph{parity pushdown game} is a tuple $\Gg = (\PA, \PE, \Gamma, \Delta, \Omega)$, where
	\begin{compactitem}
		\item $\PA\uplus\PE$, which we denote $P$, is a finite set of control states, partitioned into states of Adam and Eve;
		\item $\Gamma$ is a finite stack alphabet;
		\item $\Delta \subseteq (P\times \Gamma^*) \times (P \times \Gamma^*)$ is a finite set of transition rules;
		\item $\Omega\colon P\to\Nat$ is a priority function.
	\end{compactitem}
	
	A \emph{configuration} is a pair $(p, w) \in P \times \Gamma^*$.
	The \emph{transition relation}, denoted $\to$, is a relation between configurations defined as follows: for every $(p, \alpha, q, \beta) \in \Delta$ and $w \in \Gamma^*$ there is a transition
	$$
	(p, w\alpha) \to (q, w\beta).
	$$
	Observe that the stack grows to the right (i.e., the top of the stack is the rightmost symbol of~$w$).
	Note also that transitions in $\Delta$ may remove (pop) an arbitrary number of stack symbols in a single step.
	This more general model can be effectively reduced to the standard one in which each transition removes precisely one stack symbol.
	
	A \emph{path} is a sequence of configurations, $(p_0, w_0)(p_1, w_1)(p_2, w_2)\dots$, either finite or infinite, such that each consecutive pair follows the transition relation.
	A path is called a \emph{play} if it is maximal: either infinite, or ending in a configuration $(p_n,w_n)$ from which there is no transition.
	In a finite play, the winner is given by the state of the last configuration: the player owning this state loses, as no move is available.
	In an infinite play, the winner is determined by the parity condition applied to the sequence of control states $(p_i)_{i\geq 0}$:
	Eve wins if the greatest priority occurring infinitely often in this sequence is even; otherwise, Adam wins.
	
	A (positional) \emph{strategy} for Eve is a function defined on configurations $(p, w) \in\PE \times \Gamma^*$ with at least one available move, assigning to each such configuration a valid successor.
	Strategies for Adam are defined analogously.
	A play $(p_0, w_0)(p_1, w_1)(p_2, w_2)\dots$ (finite or infinite) is consistent with a strategy $\sigma$ for Eve if, for every $i$ such that $p_i \in\PE$ and $(p_i, w_i)$ has at least one available move, we have
	$$
	(p_{i+1}, w_{i+1}) = \sigma(p_i, w_i).
	$$
	We say that Eve \emph{wins} the game $\Gg$ from a \emph{configuration $c$} if she has a strategy $\sigma$ such that every play starting from $c$ and consistent with $\sigma$ is winning for her (and analogously for Adam).
	It is well known that parity games are positionally determined, meaning that from every configuration exactly one of the players has a winning positional strategy~\cite{EmersonJutla}.

	In the next section we use the following result regarding parity pushdown games.
	
	\begin{theorem}[\cite{Walukiewicz2001}]\label{thm:parity-exptime}
		Given a parity pushdown game $\Gg$ together with an initial configuration~$c_0$,
		one can decide in \exptime whether Eve wins $\Gg$ from $c_0$.
	\end{theorem}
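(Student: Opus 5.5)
The plan is to reduce the pushdown game $\Gg$ to an ordinary parity game on a \emph{finite} graph of exponential size, following Walukiewicz's construction, and then solve that finite game. Finite parity games with $N$ positions and $d$ priorities can be solved in time $N^{O(d)}$. Since $d\le|P|$ and $N$ will be $2^{\mathrm{poly}(|\Gg|)}$, this yields \exptime.

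First I normalise $\Gg$. Every rule $(p,\alpha,q,\beta)$ can be simulated through fresh intermediate states so that each step reads exactly one top symbol $a\in\Gamma$ and then either pops it, replaces it by one symbol, or replaces it by two symbols (a push). A bottom marker handles an empty stack, and dead ends are made into explicit losing sink states. This blows up size only polynomially and preserves the winner, provided the intermediate states get priority $0$ (or the smallest priority) so they do not affect the parity condition.

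The finite game $\Gg'$ then has positions of the form $(p,a,R,m)$. Here $p$ is the control state and $a$ the top symbol. The component $R\subseteq P\times\{0,\dots,d\}$ is Eve's current ``claim'': it lists the pairs $(q,i)$ such that, if $a$ is eventually popped and control returns to the level below in state $q$ with maximal priority $i$ seen since that level was entered, Eve considers the play still good for her. The component $m$ records the maximal priority seen since the current level was entered.

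The moves are as follows.
\begin{compactitem}
\item Ordinary non-push moves are copied, with $m$ updated.
\item A pop to state $q$ is a terminal position: Eve wins iff $(q,\max(m,\Omega(q)))\in R$.
\item On a push of $a'$ over $b$ into state $q$, Eve first chooses a new claim $R'$. Adam then either \emph{accepts}, moving to $(q,a',R',\Omega(q))$, or \emph{challenges} some $(q',i)\in R'$, moving to $(q',b,R,\max(m,i))$ through an intermediate position of priority $i$.
\end{compactitem}
All positions of $\Gg'$ otherwise carry the priority of their control state. The size of $\Gg'$ is $|P|\cdot|\Gamma|\cdot 2^{|P|(d+1)}\cdot(d+1)$.

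Correctness is proved by translating strategies in both directions, using positional determinacy of both games.

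\emph{From $\Gg$ to $\Gg'$.} Given a winning strategy of Eve in $\Gg$, she plays in $\Gg'$ by choosing $R'$ to be exactly the set of return pairs $(q',i)$ that are reachable in the pushdown play under that strategy. Every challenge then corresponds to a genuine continuation of a play in $\Gg$, and the two plays have matching priorities.

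\emph{From $\Gg'$ to $\Gg$.} Given a winning strategy of Eve in $\Gg'$, she plays in $\Gg$ while maintaining a stack of claims, one per stack level. Consider an infinite play of $\Gg$. Either it returns below every level infinitely often, or from some point on some stack level is never popped again.
\begin{compactitem}
\item In the first case, the play decomposes into segments that are each summarised by a challenge move of matching maximal priority. Hence it is mirrored by a play of $\Gg'$ with the same $\limsup$ of priorities.
\item In the second case, the suffix above the never-popped level is mirrored by a sequence of accept moves.
\end{compactitem}
In both cases, parity in $\Gg'$ transfers to $\Gg$. Finite plays ending in a pop are covered by the terminal condition on $R$.

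I expect the main obstacle to be this second direction. It requires a careful bookkeeping argument that the priority recorded in a challenge equals the true maximal priority of the summarised segment. It must also be checked that mixing infinitely many summarised segments with finitely many unsummarised ones does not change the highest priority occurring infinitely often. Once that is established, the complexity bound follows immediately from the size estimate above.
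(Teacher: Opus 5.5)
Your proposal is correct and follows the intended route. The paper gives no proof of this statement and only cites Walukiewicz, and your sketch reconstructs his reduction: an exponential-size finite parity game in which Eve announces, at each push, a set of (return state, maximal priority) claims, and Adam either accepts or challenges one of them.

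The one thing to tighten is the case split in the $\Gg'\to\Gg$ direction, which as worded is not a real dichotomy (some level is always never popped again). You can avoid cases entirely. Build a single $\Gg'$-play by letting Adam accept exactly at the pushes that are never matched by a pop, and challenge with the true return pair at all other pushes. The parity condition then transfers because each challenged excursion is a finite block whose maximum is the recorded priority.
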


	\section{Reduction of subsumption problems to parity pushdown games}\label{section:reduction-to-games}
	
	In this section, we present a reduction of the subsumption problem in $\flreg$ with TBoxes to the problem of finding a winner in parity pushdown games.
	
	We shall use nondeterministic finite automata to represent regular expressions occurring in the TBox axioms.
	A \emph{nondeterministic finite automaton} (NFA) is a tuple $\Aa=(Q,\Sigma,\delta,q_0,F)$, where $Q$ is a finite set of states, $\Sigma$ is a finite input alphabet,
	$\delta\subseteq Q\times\Sigma\times Q$ is the transition relation, $q_0\in Q$ is the initial state, and $F\subseteq Q$ is the set of accepting states.
	Instead of $(p,a,q)\in\delta$, where $a\in\Sigma$, we write $p\xrightarrow{a} q$.
	A run of $\Aa$ on a word $w\in\Sigma^*$ is a sequence of transitions from $q_0$ reading the symbols of $w$.
	The automaton $\Aa$ accepts $w$ if there exists a run on $w$ ending in a state from $F$.
	The language recognized by $\Aa$, denoted $\Ll(\Aa)$, is the set of all words accepted by $\Aa$.
	
	Let $C \sqsubseteq_\Tt^? D$ be an instance of the subsumption problem in $\flreg$ with TBoxes.
	First, we introduce two fresh concept names $L$ and $R$, and define
	$$\Tt_{C,D} := \{L \sqsubseteq C,\ C\sqsubseteq L,\ R \sqsubseteq D,\ D\sqsubseteq R\},$$
	making sure that $L$ and $R$ define the same concepts as $C$ and $D$, respectively.
	It is easy to see that then\footnote{A careful reader may notice that the middle two subsumptions are not needed; it is enough to take $L \sqsubseteq C$ and $D\sqsubseteq R$.}
	$$C \sqsubseteq_\Tt D\qquad\text{if and only if}\qquad L \sqsubseteq_{\Tt \cup \Tt_{C,D}} R.$$
	Hence, from now on, it suffices to consider a subsumption problem of the form $L \sqsubseteq_\Tt R$ with $L,R$ being concept names, where, by abuse of notation,
	$\Tt$ already includes the additional axioms from $\Tt_{C,D}$.
	
	Moreover, we assume that every concept occurring in the TBox $\Tt$ is in normal form, that is, it is a conjunction of particles of the form $\forall E.A$,
	where $E$ is a regular expression over the alphabet $\roles$, and $A\in\names$ is a concept name.
	
	Going further, we transform every axiom of the form $\forall G_1.A_1 \sqcap \ldots \sqcap \forall G_n.A_n \sqsubseteq \forall H_1.B_1 \sqcap \ldots \sqcap \forall H_k.B_k$ from the TBox $\Tt$
	into an equivalent set of axioms with a single particle on the right side:
	$$\forall G_1.A_1 \sqcap \ldots \sqcap \forall G_n.A_n \sqsubseteq \forall H_j.B_j \qquad \text{for } j\in\{ 1, \ldots, k\}.$$
	
	Let \SigmaC and \SigmaR be the finite sets of, respectively, all concept names and all role names occurring in $\Tt$;
	in particular \SigmaC contains the two special concept names $L$ and $R$.
	Based on the TBox $\Tt$ we construct a pushdown game $\Gg_\Tt=(\PA,\PE,\Gamma,\Delta,\Omega)$.
	As the stack alphabet, we take $\Gamma=\SigmaR\cup\{\bots\}$, where $\bots$ is a special symbol, which will mark the bottom of the stack.
	For each concept name $A\in \names$, we introduce a state owned by Eve, $p_A\in\PE$.
	We designate $c_0=(p_R,\bots)$ as the initial configuration of the game.
	
	For every TBox axiom $\tau$, we proceed as follows.
	The axiom $\tau$ is of the form $\forall G_1.A_1 \sqcap \cdots \sqcap \forall G_n.A_n\sqsubseteq\forall H.B$.
	We construct a nondeterministic finite automaton $\Aa_H=(Q_H,\allowbreak \SigmaR,\allowbreak \delta_H,\allowbreak q_{H,0},\allowbreak F_H)$ such that $\Ll(\Aa_H)=\Ll(H)$.
	For each state $q\in Q_H$, we introduce a game state owned by Eve, $q\in\PE$.
	We then add transitions that allow to enter the automaton in any accepting state:
	$$(p_B,\varepsilon)\to(q,\varepsilon)\qquad\mbox{if }q\in F_H.$$
	For every transition $p\xrightarrow{r} q$ of the automaton $\Aa_H$, we add the reverse of this transition to the game:
	$$(q,r)\to (p,\varepsilon).$$
	We also introduce a new state $p_\tau\in\PA$, and add a transition that allows to reach it from the initial state:
	$$(q_{H,0},\varepsilon)\to (p_\tau,\varepsilon).$$
	This way, Eve can go from $p_B$ to $p_\tau$ while removing from the stack any word that matches $H$ (the word is read left-to-right, that is, bottom-to-top).
	
	Next, for every particle $\forall G_i.A_i$ on the left side of $\tau$, we allow Adam to go from $p_\tau$ to $p_{A_i}$ while inserting on the stack any word that matches $G_i$.
	To this end, for every $i\in\{1,\dots,n\}$ we construct a nondeterministic finite automaton $\Aa_{G_i}=(Q_{G_i},\SigmaR,\delta_{G_i},q_{G_i,0},F_{G_i})$ such that $\Ll(\Aa_{G_i})=\Ll(G_i)$,
	and we continue as follows.
	First, for each state $q\in Q_{G_i}$, we introduce a game state owned by Adam, $q\in\PA$.
	Then, we add a transition that allows Adam to enter the initial state of $\Aa_{G_i}$:
	$$(p_\tau,\varepsilon)\to (q_{G_i,0},\varepsilon).$$
	Next, for every transition $p\xrightarrow{r} q$ of $\Aa_{G_i}$, we add an analogous transition to the game:
	$$(p,\varepsilon)\to (q,r).$$
	Finally, for each accepting state $q\in F_{G_j}$ we add the transition
	$$(q,\varepsilon)\to (p_{A_j},\varepsilon).$$
	
	Of course we assume that state sets of all the constructed automata are disjoint, so that the automata do not interfere with each other, after adding them to the game.
	
	Success is reached when the current obligation becomes $L$ and no role symbol remains on the stack, that is, at the configuration $(p_L,\bots)$.
	To realize this, we introduce an artificial winning state $\pwin\in\PA$, and we add the transition
	$$(p_L,\bots)\to(\pwin,\bots).$$
	It is a state of Adam without any successors, so reaching it causes that Eve wins.
	
	The priority function $\Omega$ is defined as follows:
	$$\Omega(p)=
	\begin{cases}
		0,& \text{if } p\in\PA,\\
		1,& \text{if } p\in\PE.
	\end{cases}$$
	Such a definition causes that if Adam stays in his states (i.e., in some automaton $\Aa_{G_i}$) forever, then Eve wins;
	otherwise, Eve has to reach $\pwin$ in a finite time, as otherwise infinitely many states of priority $1$ would be produced.
	
	\begin{theorem}\label{theorem:main-reduction}
		Let $L \sqsubseteq_\Tt^? R$ be an instance of the subsumption problem in $\flreg$ with TBoxes (after the preliminary transformation described above),
		and let $\Gg_\Tt$ be the parity pushdown game associated with this instance by the above construction.
		Then Eve has a winning strategy from the configuration $c_0=(p_R,\bots)$ if and only if $L \sqsubseteq_\Tt R$.
	\end{theorem}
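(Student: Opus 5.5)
Configurations $(p_A,\bots w)$ have a semantic reading: the obligation ``every $w$-successor of an element of $L$ belongs to $A$'', i.e.\ $L\sqsubseteq_\Tt\forall w.A$. The plan is to prove both directions of the theorem through this reading, by showing that Eve wins from $(p_A,\bots w)$ if and only if $L\sqsubseteq_\Tt\forall w.A$. The theorem is the case $A=R$, $w=\varepsilon$. Since $\Omega$ assigns $1$ to every Eve state and Adam's automaton states form acyclic-in-effect push phases, Eve wins exactly when she reaches $\pwin$ in finitely many rounds on every branch. Here a \emph{round} consists of Eve choosing an axiom $\tau$ and popping a suffix $v\in\Ll(H)$ of $w$, then Adam choosing a particle $\forall G_i.A_i$ and pushing some $u\in\Ll(G_i)$. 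Plays where Adam loops forever inside $\Aa_{G_i}$ without finishing are won by Eve. So Eve's winning region is the least fixpoint of the one-round operator, and we can assign to each winning configuration an ordinal rank, which may be infinite because Adam has infinitely many choices of $u$.

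\emph{Soundness (Eve wins $\Rightarrow$ subsumption).} We argue by induction on rank. At rank $0$ the configuration is $(p_L,\bots)$, and $L\sqsubseteq\forall\varepsilon.L$ holds trivially. Otherwise Eve writes $w=w'v$ with $v\in\Ll(H)$ for an axiom $\tau:\bigsqcap_i\forall G_i.A_i\sqsubseteq\forall H.B$ with $B=A$. For every $i$ and every $u\in\Ll(G_i)$ the configuration $(p_{A_i},\bots w'u)$ has smaller rank, so by induction $L\sqsubseteq_\Tt\forall w'u.A_i$. By \cref{eq:in-restriction} this gives $L\sqsubseteq_\Tt\forall w'.\forall G_i.A_i$ for every $i$. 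Applying the axiom $\tau$ together with monotonicity of value restrictions and \cref{eq:cap-on-right} yields $L\sqsubseteq_\Tt\forall w'.\forall H.A$, hence $L\sqsubseteq_\Tt\forall w'v.A$. One detail must be checked: the reversed automaton transitions pop $v$ letter by letter from the top while moving from an accepting state back to $q_{H,0}$, so they remove exactly a suffix $v\in\Ll(H)$. This relies on $\bots$ never being popped, which holds because no rule reads $\bots$ except the win rule.

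\emph{Completeness (subsumption $\Rightarrow$ Eve wins).} We argue by contraposition: if Adam wins from $c_0$, we build a model of $\Tt$ with an element in $L$ but not in $R$. Let the universe be all words $w\in\SigmaR^*$, with $r$-edges $w\to wr$ (a tree rooted at $\varepsilon$). Interpret $A^\Jj=\{w\mid \text{Eve wins from }(p_A,\bots w)\}$. Then $\varepsilon\in L^\Jj$ by the win rule, and $\varepsilon\notin R^\Jj$ by assumption. The main obstacle is verifying that every axiom $\tau$ holds in $\Jj$. Suppose $x\in(\forall G_i.A_i)^\Jj$ for all $i$, and let $y=xv$ with $v\in\Ll(H)$. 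We need Eve to win from $(p_B,\bots xv)$. She plays $\tau$ popping $v$ to reach $(p_\tau,\bots x)$. Every Adam choice leads to a configuration $(p_{A_i},\bots xu)$ with $u\in\Ll(G_i)$, and $xu\in A_i^\Jj$ by the hypothesis. So each of Adam's options is Eve-winning, and these strategies combine into a winning strategy (using determinacy/least-fixpoint closure). In this tree model, $(x,y)\in E^\Jj$ holds exactly when $y=xv$ for some $v\in\Ll(E)$, which is what makes the check go through. Hence $\Jj$ is a model of $\Tt$ violating $L\sqsubseteq R$, which completes the equivalence.
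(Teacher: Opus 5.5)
Your proof is correct. The completeness half (subsumption $\Rightarrow$ Eve wins) is the paper's argument: the canonical tree model on $\SigmaR^*$, with $A^\Jj$ the set of words $w$ such that Eve wins from $(p_A,\bots w)$.

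The one step you leave vague there, that ``these strategies combine into a winning strategy'', is exactly where the paper is careful. Strategies are defined positionally, so the paper does not glue Eve's strategies. Instead it uses positional determinacy: it assumes Adam has a winning strategy from $(p_B,\bots xv)$ and builds a play consistent with it that Eve wins.

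The soundness half takes a genuinely different route. The paper fixes a countermodel $\Ii$ with $x\in L^\Ii\setminus R^\Ii$. Against Eve's winning strategy, Adam then plays so as to preserve the invariant $x\notin(\forall u.B)^\Ii$ at every configuration $(p_B,\bots u)$. This uses nothing beyond the definition of a winning strategy.

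You instead characterize Eve's winning region as the least fixpoint of a one-round operator. You then prove by transfinite induction on rank that every winning configuration $(p_A,\bots w)$ satisfies $L\sqsubseteq_\Tt\forall w.A$, so a winning strategy reads as a well-founded, infinitely branching derivation. This buys a uniform semantic meaning for every configuration, and the full equivalence you announced then follows. Your model $\Jj$ is a model of $\Tt$ independently of who wins. Since $\varepsilon\in L^\Jj$, the subsumption $L\sqsubseteq_\Tt\forall w.A$ forces $w\in A^\Jj$.

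The cost is that soundness rests on the fixpoint characterization, which you only assert. The inclusion you actually need is that winning configurations lie in the least fixpoint. It is proved by the same kind of play construction the paper uses: outside the fixpoint, Adam can answer every round so as to stay outside, which forces infinitely many priority-$1$ visits.

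A minor point: rank $0$ is not only $(p_L,\bots)$. For example, it also covers axioms all of whose left-side languages $\Ll(G_i)$ are empty. Your inductive step handles these cases vacuously, so nothing breaks.
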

	
	We prove separately two implications.
	
	For the left-to-right implication assume that $\sigma$ is a winning strategy for Eve from $(p_R,\bots)$, and suppose towards a contradiction that $L \not\sqsubseteq_\Tt R$.
	Then there exists a model $\Ii$ of $\Tt$ and an element $x\in L^\Ii$ such that $x\notin R^\Ii$.
	Fix one such element.
	
	We construct a play starting from $(p_R,\bots)$ and consistent with $\sigma$, but won by Adam.
	This yields a contradiction.
	We construct the play so that the following invariant is preserved:
	whenever the current configuration is of the form $(p_B,\bots u)$, we have $x\notin (\forall u.B)^\Ii$.
	
	Recall that $x\notin R^\Ii=(\forall\varepsilon.R)^\Ii$, meaning that the initial configuration $(p_R,\bots)$ satisfies the invariant.
	
	Assume therefore that the game is currently in a configuration $(p_B,\bots u)$ satisfying the invariant.
	First, if no move is available to Eve from that configuration, then she loses and we are done.
	Second, observe that $(p_B,\bots u)\neq(p_L,\bots)$.
	Indeed, for $B=L$ and $u=\varepsilon$ the invariant would say that $x\notin(\forall\varepsilon.L)^\Ii=L^\Ii$, contradicting the assumption that $x\in L^\Ii$.
	Thus, the transition to the state $\pwin$ is not available.
	
	It follows that from the configuration $(p_B,\bots u)$, Eve chooses some axiom
	$$
	\tau=\big(\forall G_1.A_1 \sqcap \cdots \sqcap \forall G_n.A_n \sqsubseteq \forall H.B\big)\in\Tt
	$$
	and moves to the part of the game corresponding to this axiom.
	Every transition in this part removes a symbol from the stack, so the play cannot stay in this part for an infinite time.
	Moreover, all states in this part belong to Eve, so if the play gets stuck there, then Adam wins, as we wanted.
	
	The remaining case is that the play reaches the state $p_\tau$ after some time.
	By construction of the game, this could happen only if some word $v\in\Ll(\Aa_H)=\Ll(H)$ was removed from the stack.
	Consequently, the original word $u$ was necessarily of the form $u=u'v$ for some $u'\in\SigmaR^*$, and after this phase of the play the configuration reached is $(p_\tau,\bots u')$.
	Recall that $\tau$, as an axiom from $\Tt$, holds in $\Ii$.
	Since $v\in\Ll(H)$, all elements of $(\forall H.B)^\Ii$ are elements of $(\forall v.B)^\Ii$ (as follows from \cref{eq:in-restriction}),
	so the subsumption $\forall G_1.A_1 \sqcap \cdots \sqcap \forall G_n.A_n \sqsubseteq \forall v.B$ holds in $\Ii$ as well.
	By monotonicity of value restriction, also the subsumption $\forall u'.(\forall G_1.A_1 \sqcap \cdots \sqcap \forall G_n.A_n)\sqsubseteq\forall u'.\forall v.B$, 
	which can be syntactically rewritten as $\forall u'G_1.A_1 \sqcap \cdots \sqcap \forall u'G_n.A_n\sqsubseteq\forall u.B$, holds in $\Ii$.
	Recalling the invariant $x\notin (\forall u.B)^\Ii$, this subsumption implies that $x\notin(\forall u'G_1.A_1 \sqcap \cdots \sqcap \forall u'G_n.A_n)^\Ii$,
	hence $x\notin(\forall u'G_j.A_j)^\Ii$ for some $j\in\{1,\dots,n\}$.
	Using \cref{eq:in-restriction} we observe that then $x\notin(\forall u'w.A_j)^\Ii$ for some word $w\in\Ll(G_j)$ (where we note that all words in $\Ll(u'G_j)$ are of the form $u'w$ for some $w\in\Ll(G_j)$).
	
	Adam now chooses the transition from $p_\tau$ leading to the initial state $q_{G_j,0}$ of the automaton $\Aa_{G_j}$, and then follows an accepting run of $\Aa_{G_j}$ on the word $w$.
	Since all states in this part of the game belong to Adam, he can enforce this finite fragment of the play.
	By construction of the game, the reached configuration is $(p_{A_j},\bots u'w)$, and the invariant holds again: $x\notin (\forall u'w.A_j)^\Ii$.
	From this point on, we continue the construction of the play in the same way, preserving the invariant.
	
	Note that if we continue this construction forever, we obtain a play that visits infinitely many configurations of the form $(p_B,\bots u)$, having priority $1$;
	such a play is winning for Adam.
	Concluding, assuming $L\not\sqsubseteq_\Tt R$ we were able to construct a play won by Adam and consistent with the winning strategy of Eve, leading to a contradiction.
	Therefore $L \sqsubseteq_\Tt R$.
	
	We now come to the right-to-left implication from the theorem: assuming $L \sqsubseteq_\Tt R$, we show that Eve wins from the configuration $(p_R,\bots)$.
	
	We define an interpretation $\Ii$ as follows: the universe is $\univ^\Ii=\SigmaR^*$, for each role $r\in\SigmaR$ we take $r^\Ii=\{(u,ur)\mid u\in\SigmaR^*\}$,
	and for each concept name $A\in\SigmaC$ we take
	\[
	A^\Ii=\{u\in\SigmaR^* \mid \text{Eve wins from }(p_A,\bots u)\}.
	`	\]
	Interpretation of role names from $\roles\setminus\SigmaR$ and concept names from $\names\setminus\SigmaC$ is irrelevant.
	
	We show that $\Ii$ is a model of $\Tt$.
	Take an arbitrary axiom
	\[\tau=(\forall G_1.A_1 \sqcap \cdots \sqcap \forall G_n.A_n \sqsubseteq \forall H.B)\in\Tt,\]
	and an arbitrary word $u\in\SigmaR^*$ such that $u\in (\forall G_1.A_1 \sqcap \cdots \sqcap \forall G_n.A_n)^\Ii$.
	This means that for every $j\in\{1,\dots,n\}$ and every $w\in \Ll(G_j)$ we have $uw\in A_j^\Ii$, that is, Eve wins from configuration $(p_{A_j},\bots uw)$.
	We must show that $u\in(\forall H.B)^\Ii$, that is, that for every $v\in \Ll(H)$, Eve wins from the configuration $(p_B,\bots uv)$.
	
	Fix an arbitrary $v\in \Ll(H)$, and consider the configuration $(p_B,\bots uv)$.
	Since parity games are positionally determined, it is enough to show that Adam does not win from $(p_B,\bots uv)$.
	Suppose towards a contradiction that Adam wins from $(p_B,\bots uv)$, and fix some his winning strategy $\sigma_\mathsf{A}$.
	We construct a play consistent with $\sigma_\mathsf{A}$ and won by Eve, contradicting the fact that $\sigma_\mathsf{A}$ is winning for Adam.
	
	First, Eve chooses the axiom $\tau$, enters the part of the game corresponding to $\Aa_H$, and follows backwards an accepting run of $\Aa_H$ on the word $v$.
	By construction, this removes exactly the suffix $v$, and the play reaches $(p_\tau,\bots u)$.
	At this point Adam chooses some index $j\in\{1,\dots,n\}$, and the play enters the part corresponding to $\Aa_{G_j}$.
	There are now three possibilities.	
	
	If Adam stays forever in states belonging to $\PA$, then the resulting infinite play is winning for Eve, since only priority $0$ occurs infinitely often.
	
	If Adam gets stuck in a state belonging to $\PA$, then the resulting finite play is winning for Eve, because a finite play ending in a position of Adam is losing for Adam.
	
	The remaining possibility is that Adam follows an accepting run of $\Aa_{G_j}$ on some word $w\in\Ll(\Aa_{G_j})=\Ll(G_j)$.
	In this case the play reaches the configuration $(p_{A_j},\bots uw)$.
	As explained above, our assumption on $u$ implies that $uw\in A_j^\Ii$, meaning that Eve wins from $(p_{A_j},\bots uw)$.
	Let $\sigma_\mathsf{E}$ be her winning strategy from this configuration.
	We define the remaining part of the play following both $\sigma_\mathsf{A}$ and $\sigma_\mathsf{E}$.
	Because $\sigma_\mathsf{E}$ is winning for Eve, the suffix of the play starting from $(p_{A_j},\bots uw)$ is won by Eve, hence the whole play as well.
	
	We have thus shown that Adam cannot win from $(p_B,\bots uv)$; Eve wins there.
	Since $v\in \Ll(H)$ was arbitrary, we conclude that $u\in (\forall H.B)^\Ii$.
	Thus $\tau$ holds in $\Ii$.
	Since $\tau$ was an arbitrary axiom of $\Tt$, we obtain that $\Ii$ is a model of $\Tt$.
	
	Finally, Eve trivially wins from $(p_L,\bots)$, because there is a transition $(p_L,\bots)\to (\pwin,\bots)$, and $\pwin$ is a dead-end state owned by Adam.
	Hence $\varepsilon\in L^\Ii$.
	By the assumed subsumption $L \sqsubseteq_\Tt R$, this implies that $\varepsilon\in R^\Ii$.
	Therefore, by the definition of $R^\Ii$, Eve wins from $(p_R,\bots)$.
	
	\begin{theorem}
		The subsumption problem in each of the logics \flo, \flbot, \flreg, and \flbotreg, with TBoxes, is \exptime-complete.
	\end{theorem}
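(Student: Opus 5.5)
The plan is to split the statement into an upper bound for the most expressive logic and a lower bound for the weakest one, and let the inclusions between the logics do the rest. Syntactically, \flo is a fragment of both \flbot and \flreg, and both of these are fragments of \flbotreg. So every instance of the subsumption problem with TBoxes in a weaker logic is literally an instance in a stronger one, with the same meaning. It therefore suffices to prove that \flbotreg with TBoxes is in \exptime and that \flo with TBoxes is \exptime-hard.

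\textbf{Upper bound.} Given an instance $C\sqsubseteq^?_\Tt D$ of \flbotreg with TBoxes, I will chain the reductions established above.
First, \cref{theorem:fromFLBOTREGtoFLREG} produces the \flreg instance $C^B\sqsubseteq^?_{\Tt^B\cup\Rr}D^B$. This instance is of polynomial size, since $\Rr$ has $|\SigmaC|+|\SigmaR|$ axioms.
Second, I apply the preliminary transformations of \cref{section:reduction-to-games}. These add the axioms $\Tt_{C,D}$, bring every concept into normal form, and split right-hand sides into single particles.
The only point needing care is that normalisation stays polynomial. I will argue that the rewrite rules (distributing $\forall$ over $\sqcap$, merging nested restrictions into concatenated expressions, merging particles with the same concept name via $+$, eliminating $\top$) each produce output of size linear or at most quadratic in the input. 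The reason is that each particle's regular expression is a concatenation of expressions along one path of the syntax tree.
Third, I build the game $\Gg_\Tt$. Its states come from the concept names, the axioms, and the NFAs for the expressions; by the standard Thompson construction, each NFA has size linear in its regular expression. Hence $\Gg_\Tt$ has polynomial size.
Finally, by \cref{theorem:main-reduction}, Eve wins from $(p_R,\bots)$ iff the subsumption holds, and \cref{thm:parity-exptime} decides this in \exptime. A polynomial reduction followed by an \exptime procedure stays in \exptime.

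\textbf{Lower bound.} For hardness I will cite the known \exptime-hardness of subsumption in \flo with TBoxes \cite{Baader2005,Hofmann2005}. Because \flo is a fragment of the other three logics, the identity map is a reduction, so hardness transfers to all four. This gives completeness for \flo, \flbot, \flreg and \flbotreg with TBoxes.

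\textbf{Expected obstacle.} The only nontrivial step is the size accounting in the upper bound. In particular, the normal-form transformation must not blow up exponentially. I would handle this by not distributing into a product. Instead, I collect one particle $\forall E.A$ for each occurrence of a concept name $A$ (and of $\bot$) in the syntax tree, where $E$ is the concatenation of the role expressions on the path to that occurrence. Each such $E$ has length at most the size of the input concept, so the total size is at most quadratic.
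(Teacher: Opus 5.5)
Your proposal is correct and follows the paper's own proof. Hardness for \flo comes from the literature and transfers along the fragment inclusions; membership chains \cref{theorem:fromFLBOTREGtoFLREG}, the game of \cref{theorem:main-reduction}, and \cref{thm:parity-exptime}; your quadratic path-based normalisation just spells out the polynomiality the paper asserts without detail. One small precision: Thompson's construction yields $\varepsilon$-transitions, which the paper's NFAs do not allow, so use the Glushkov construction (or $\varepsilon$-elimination) instead, which still gives automata of polynomial size.
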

	
	\begin{proof}
		For the lower bound, recall that the subsumption problem in $\flo$ with TBoxes is \exptime-hard~\cite{Hofmann2005}.
		Since $\flo$ is a fragment of each of the logics $\flbot$, $\flreg$, and $\flbotreg$, it follows immediately that the subsumption problem with TBoxes is \exptime-hard for all of them.
		
		It remains to prove membership in \exptime.
		
		For $\flreg$, this follows from the previous theorem: every instance of the subsumption problem in $\flreg$ with TBoxes can be reduced in polynomial time to a parity pushdown game,
		and parity pushdown games are solvable in \exptime, as stated in \cref{thm:parity-exptime}.
		
		For $\flbotreg$, we use \cref{theorem:fromFLBOTREGtoFLREG}, which yields a polynomial reduction from an instance of the subsumption problem in $\flbotreg$ with TBoxes
		to an equivalent instance of the subsumption problem in $\flreg$ with TBoxes.
		Hence the subsumption problem in $\flbotreg$ with TBoxes also belongs to \exptime.
		
		Finally, \flo and \flbot are fragments of $\flbotreg$.
		Therefore subsumption in \flo with TBoxes and in \flbot with TBoxes are special cases of subsumption in $\flbotreg$ with TBoxes, and thus also belongs to \exptime
		(this also follows from earlier work~\cite{Baader2005,Hofmann2005,Baader2022}).
		
		Consequently, in each of the logics \flo, \flbot, \flreg, and \flbotreg, the subsumption problem with TBoxes is \exptime-complete.
	\end{proof}
	
	\begin{example}
		We illustrate the construction of the parity pushdown game associated with the subsumption problem
		$$
		\forall rrr^*.A \sqcap B \sqsubseteq \forall rr^*.A
		$$
		with the TBox containing the axiom $B\sqsubseteq\forall r.A$.
		Since this instance already belongs to $\flreg$, no preliminary reduction from $\flbotreg$ is needed.
		
		Following the general construction, we introduce two fresh concept names $L$ and $R$, and replace the original problem by the equivalent one $L \sqsubseteq_\Tt R$,
		where $\Tt$ consists of the following axioms, displayed here already after switching to normal forms, and splitting subsumptions with multiple particles on the right side:
		\begin{gather*}
			\forall\varepsilon.B\sqsubseteq\forall r.A,\qquad
			\forall\varepsilon.L \sqsubseteq \forall rrr^*.A,\qquad
			\forall\varepsilon.L \sqsubseteq \forall\varepsilon.B,\\
			\forall rrr^*.A \sqcap \forall\varepsilon.B \sqsubseteq \forall\varepsilon.L,\qquad
			\forall\varepsilon.R \sqsubseteq \forall rr^*.A,\qquad
			\forall rr^*.A \sqsubseteq \forall\varepsilon.R.
		\end{gather*}
		
		We now describe the corresponding parity pushdown game.
		The concept-name states are $p_L,p_R,p_A,p_B$, all owned by Eve, and the initial configuration is $(p_R,\bots)$.
		
		The nondeterministic finite automata used in the construction are the following:
		\begin{compactitem}
			\item an automaton $\Aa_{rr^*}$ recognizing $\Ll(rr^*)=\{r^n\mid n\geq 1\}$,
			\item an automaton $\Aa_r$ recognizing $\Ll(r)=\{r\}$,
			\item an automaton $\Aa_{rrr^*}$ recognizing $\Ll(rrr^*)=\{r^n\mid n\geq 2\}$,
			\item and the trivial automaton for $\varepsilon$, used for particles of the form $\forall \varepsilon.L$.
		\end{compactitem}
		
		The game starts in the configuration $(p_R,\bots)$.
		Eve first chooses the axiom
		$$
		\tau_1=(\forall rr^*.A \sqsubseteq R).
		$$
		Since the right side is just $\forall\varepsilon.R$, this corresponds to the regular expression $\varepsilon$, so no symbol is removed from the stack.
		The play reaches the state $p_{\tau_1}$.
		From $p_{\tau_1}$, Adam must choose the unique particle on the left side, namely $\forall rr^*.A$.
		To do so, he enters the part of the game corresponding to $\Aa_{rr^*}$ and generates some word from the language $\Ll(rr^*)$.
		Hence, he chooses an arbitrary word $r^n$, where $n\geq 1$, and the play reaches the configuration $(p_A,\bots r^n)$.
		Now Eve must reduce the current obligation to $L$.
		At this point she distinguishes two cases.
		
		\emph{Case 1: $n=1$.}
		Then the current configuration is $(p_A,\bots r)$.
		Eve chooses the axiom
		$$
		\tau_2=(\forall\varepsilon.B \sqsubseteq \forall r.A).
		$$
		Using the automaton $\Aa_r$, she removes the symbol $r$ from the top of the stack and reaches the state $p_{\tau_2}$ with stack $\bots$.
		The only particle on the left side is $\forall \varepsilon.B$, so Adam inserts the empty word and the play reaches $(p_B,\bots)$.
		Here, Eve chooses the axiom $\tau_3=(\forall\varepsilon.L \sqsubseteq \forall\varepsilon.B)$, reaches the state $p_{\tau_3}$ while removing the empty word from the stack,
		from which Adam can only inserts the empty word and go to $(p_L,\bots)$.
		
		\emph{Case 2: $n\geq 2$.}
		Then the current configuration is $(p_A,\bots r^n)$.
		Eve chooses the axiom
		$$
		\tau_4=(\forall \varepsilon.L \sqsubseteq \forall rrr^*.A).
		$$
		Since $r^n\in \Ll(rrr^*)$ whenever $n\geq 2$, she can use the automaton $\Aa_{rrr^*}$ to remove the whole suffix $r^n$ from the stack and reach a state $p_{\tau_4}$ with stack $\bots$.
		As above, Adam is then forced to move to $(p_L,\bots)$.
		
		In both cases Eve reaches the success configuration $(p_L,\bots)$, from which she moves to the designated winning state.
		Therefore Eve has a winning strategy from the initial configuration $(p_R,\bots)$.
		
		This example reflects exactly why the original subsumption is valid.
		The formula $\forall rr^*.A$ requires that all $r$-paths lead to elements satisfying $A$.
		Adam may therefore choose any word $r^n$ with $n\geq 1$.
		If $n=1$, Eve answers by the conjunct $\forall r.A$; if $n\geq 2$, she answers by the conjunct $\forall rrr^*.A$.
		Thus the two conjuncts on the left side cover all words in $\Ll(rr^*)$, and the subsumption holds.
	\end{example}

	\section{Conclusions}\label{section:conclusions}
	
	We have shown that the subsumption problem in \flreg extended with TBoxes is \exptime-complete.
	This result is obtained via a reduction from subsumption to parity pushdown games, which are known to be solvable in exponential time.
	
	Based on this reduction and existing algorithms for solving such games, a decision procedure for subsumption in \flreg with TBoxes can be derived in a straightforward manner.
	An implementation of this procedure is planned.
	In particular, we intend to employ existing solvers for parity pushdown games, such as PDSolver~\cite{Hague2010}.
	
	The approach also applies to \flbotreg due to the reduction presented in \cref{section:reduction}.
	Moreover, since \flo is a sublogic of \flreg, the procedure is also applicable to subsumption in \flo with TBoxes.
	
	An implementation, \emph{FL$_0$wer}, for subsumption in \flo with TBoxes already exists and is described in Baader et al.~\cite{Baader2022}.
	\emph{FL$_0$wer} is based on a different algorithmic approach.
	It is therefore of interest to investigate experimentally whether the proposed implementation performs better or worse than \emph{FL$_0$wer} on subsumption problems in \flo with TBoxes.
	Nevertheless, the presented method provides a general framework for deciding the corresponding reasoning problems in the other logics considered in this work.
	
	A further direction for future work is the extension of the presented techniques to other reasoning tasks in \flo, \flbot, \flreg, and \flbotreg with TBoxes, such as concept matching and unification.

	\bibliographystyle{plain}
	\bibliography{LPARsubmission}
	\newpage
	\appendix
	\section{Missing proofs for subsumption in \texorpdfstring{\flreg}{FLreg} and \texorpdfstring{\flbotreg}{FL⊥reg} without TBoxes}\label{appendix:subsumption-no-TBox}
	
	Recall that \cref{lemma:no-TBox} talks about subsumptions of the form either $\forall E.A\sqcap\forall F.\bot\sqcap C_0\sqsubseteq\forall G.A$ or $\forall F.\bot\sqcap C_0\sqsubseteq\forall G.\bot$,
	where the concept description $C_0$ does not use $\bot$ and, in case of the first form, $C_0$ also does not use $A$.
	
	\lemmaNoTBox*
	
	\begin{proof}
		The right-to-left implications are immediate consequence of \cref{eq:cap-on-left} and transitivity of subsumption.
		
		For the left-to-right implication in the first item, consider any interpretation $\Ii$ in which we want to check whether $\forall E.A\sqcap\forall F.\bot\sqsubseteq\forall G.A$ holds.
		Let $\Jj$ be a modification of $\Ii$ in which we map all concept names other than $A$ to the whole domain $\univ^\Ii$.
		Such a change does not alter the semantics of our concept descriptions not using those concept names, that is,
		$(\forall E.A\sqcap\forall F.\bot)^\Ii=(\forall E.A\sqcap\forall F.\bot)^\Jj$ and $(\forall G.A)^\Jj=(\forall G.A)^\Ii$.
		On the other hand, for each concept name $B$ other than $A$ and for each regular role expression $E$ the assumption $B^\Jj=\univ^\Jj$ implies that $(\forall E.B)^\Jj=\univ^\Jj$.
		We have $C_0^\Jj=\univ^\Jj$, hence $(\forall E.A\sqcap\forall F.\bot)^\Jj=(\forall E.A\sqcap\forall F.\bot\sqcap C_0)^\Jj$.
		The inclusion $(\forall E.A\sqcap\forall F.\bot\sqcap C_0)^\Jj \subseteq \forall (G.A)^\Jj$, which holds by assumption, implies the thesis.
		
		The left-to-right implication in the second item can be shown analogously.
		The only difference is that now there is no special concept name $A$; in $\Jj$ we map all concept names to the whole domain.
		Then $C_0$, not using $\bot$, again becomes interpreted as the whole domain, and we can conclude as above.
	\end{proof}
	
	\lemmaSubsumptionLanguage*
	
	\begin{proof}
		We start with the right-to-left implication from the first item.
		Assume that $\Ll(G)\subseteq \Ll(E+F\SigmaR^*)$.
		As a direct consequence of the semantics (or using the equivalence from \cref{eq:in-restriction}) we obtain the subsumption
		\[
		\forall (E+F\SigmaR^*).A\sqsubseteq \forall G.A.
		\]
		Due to \cref{eq:cap-on-left}, the subsumption holds even more if we extend the left side by an additional particle:
		\[
		\forall (E+F\SigmaR^*).A\sqcap\forall F.\bot\sqsubseteq \forall G.A.
		\]
		Using distributive laws we can change this subsumption into
		\[
		\forall E.A\sqcap\forall F.(\forall \SigmaR^*.A\sqcap\bot)\sqsubseteq \forall G.A.
		\]
		This gives the desired subsumption after removing $\forall \SigmaR^*.A$, which can be done thanks to \cref{eq:bot-simplify}.
		
		The right-to-left implication from the second item is shown similarly.
		Namely, as a consequence of $\Ll(G) \subseteq \Ll(F\SigmaR^*)=\Ll(F\SigmaR^*+F)$ we obtain the subsumption
		\[
		\forall (F\SigmaR^*+F).\bot\sqsubseteq \forall G.\bot,
		\]
		which can be rewritten as
		\[
		\forall F.(\forall \SigmaR^*.\bot\sqcap\bot)\sqsubseteq \forall G.\bot.
		\]
		This gives the desired subsumption after removing $\forall \SigmaR^*.\bot$, which can be done thanks to \cref{eq:bot-simplify}.
		
		In order to prove the opposite implications, we consider an interpretation $\Ii$ with domain $\univ^\Ii=\roles^*\setminus \Ll(F\SigmaR^*)$,
		where $r^\Ii=\{(w,wr)\mid w,wr\in\univ^\Ii\}$ for all $r\in\roles$ and, in case of the first item, $A^\Ii=\Ll(E)\cap\univ^\Ii$;
		interpretation of other concept names is irrelevant.
		Note that if a word $w\in\SigmaR^*$ belongs to $\univ^\Ii$, then all its prefixes as well.
		Thus, for any regular expression $H$ using only role names from $\SigmaR$ (in particular for $E$, $F$, and $G$) and any $w\in\roles^*$
		we have $(\varepsilon,w)\in H^\Ii$ if and only if $w\in \Ll(H)\cap\univ^\Ii$.
		This implies, for each concept description $C$, that
		\[
		\varepsilon\in(\forall H.C)^\Ii\quad\Longleftrightarrow\quad \Ll(H)\cap\univ^\Ii\subseteq C^\Ii.\tag{$\diamondsuit$}
		\]
		
		Concentrate now on the first item from the lemma statement.
		Because $\Ll(E)\cap\univ^\Ii=A^\Ii\subseteq A^\Ii$ and $\Ll(F)\cap\univ^\Ii=\emptyset\subseteq\bot^\Ii$, Equivalence ($\diamondsuit$)
		gives us $\varepsilon\in(\forall E.A)^\Ii\cap(\forall F.\bot)^\Ii=(\forall E.A\sqcap\forall F.\bot)^\Ii\subseteq(\forall G.A)^\Ii$, with the last inclusion by the assumed subsumption.
		Using Equivalence ($\diamondsuit$) again, this implies $\Ll(G)\cap\univ^\Ii\subseteq A^\Ii=\Ll(E)\cap\univ^\Ii$.
		Clearly both $\Ll(G)$ and $\Ll(E)$ are included in $\roles^*=\univ^\Ii\cup\Ll(F\SigmaR^*)$,
		so by adding $\Ll(F\SigmaR^*)$ to both sides of the above inclusion we obtain $\Ll(G)\subseteq \Ll(G)\cup\Ll(F\SigmaR^*)\subseteq\Ll(E)\cup\Ll(F\SigmaR^*)=\Ll(E+F\SigmaR^*)$, as claimed.
		
		In the second item, Equivalence ($\diamondsuit$) applied to the inclusion $\Ll(F)\cap\univ^\Ii\subseteq\bot^\Ii$ gives us $\varepsilon\in(\forall F.\bot)^\Ii$,
		which by the assumed subsumption implies $\varepsilon\in(\forall G.\bot)^\Ii$.
		Using Equivalence ($\diamondsuit$) once again, this gives us $\Ll(G)\cap\univ^\Ii\subseteq\bot^\Ii=\emptyset$.
		Recalling the definition of $\univ^\Ii$, this can be restated as $\Ll(G)\subseteq \Ll(F\SigmaR^*)$, which is the claimed inclusion.
	\end{proof}

	\section{Proof of \texorpdfstring{\cref{theorem:fromFLBOTREGtoFLREG}}{Theorem \ref{theorem:fromFLBOTREGtoFLREG}}}\label{appendix:bot}
	
	We present here a full proof of \cref{theorem:fromFLBOTREGtoFLREG}.
	
	First, we prove \cref{lemma:intFLBOTREGtointFLREG}, which is used in a setting with an arbitrary interpretation $\Ii$ (the one for \flbotreg)
	and with interpretation $\Jj$ defined based on $\Ii$.

	\lemmaintFLBOTREGtointFLREG*
	
	\begin{proof}
		We proceed by induction on the structure of $K$.
		
		If $K = A$ is a concept name, then $K^B = A$ and $A\neq B$.
		By assumption we have $A^\Jj = A^\Ii$, hence $(K^B)^\Jj = A^\Jj = A^\Ii = K^\Ii$.
		
		If $K = \top$, then $K^B = \top$.
		Since $\univ^\Jj = \univ^\Ii$, we obtain $(K^B)^\Jj = \top^\Jj = \univ^\Jj = \univ^\Ii = \top^\Ii = K^\Ii$.
		
		If $K = \bot$, then $K^B = B$.
		By assumption we have $B^\Jj = \emptyset$, and therefore $(K^B)^\Jj = B^\Jj = \emptyset = \bot^\Ii = K^\Ii$.
		
		If $K = K_1 \sqcap K_2$, then $K^B = K_1^B \sqcap K_2^B$. By the induction hypothesis we have $(K_1^B)^\Jj = K_1^\Ii$ and $(K_2^B)^\Jj = K_2^\Ii$.
		Hence $(K^B)^\Jj = (K_1^B \sqcap K_2^B)^\Jj = (K_1^B)^\Jj \cap (K_2^B)^\Jj = K_1^\Ii \cap K_2^\Ii = (K_1 \sqcap K_2)^\Ii = K^\Ii$.
		
		Finally, if $K = \forall E.M$ for a regular role expression $E$, then $K^B = \forall E.M^B$. 	
		By semantics of the value restriction we have
		\[
		(K^B)^\Jj=(\forall E.M^B)^\Jj = \big\{ x \in \univ^\Jj \mid y\in (M^B)^\Jj\mbox{ for all }y\mbox{ such that }(x,y) \in E^\Jj\big\}.
		\]
		We have $\univ^\Jj = \univ^\Ii$ by assumption, $(M^B)^\Jj=M^\Ii$ by the induction hypothesis, and $E^\Jj = E^\Ii$ because $r^\Jj=r^\Ii$ for every role name $r$.
		It follows that 
		\[
		(K^B)^\Jj = \big\{x \in \univ^\Ii \mid y\in (M^B)^\Ii\mbox{ for all }y\mbox{ such that }(x,y) \in E^\Ii\big\} = (\forall E.M)^\Ii = K^\Ii.\tag*{\qedhere}
		\]
	\end{proof}
	
	\cref{lemma:intFLBOTREGtointFLREG} allows us to obtain the the right-to-left implication of \cref{theorem:fromFLBOTREGtoFLREG}.
	In this implication we assume that the subsumption $C^B \sqsubseteq_{\Tt^B \cup\Rr} D^B$ holds, and we have to show that then $C \sqsubseteq_\Tt D$ also holds.
	To this end, take an arbitrary model $\Ii$ of the TBox $\Tt$, and consider a corresponding interpretation~$\Jj$ as specified in \cref{lemma:intFLBOTREGtointFLREG}.
	
	To see that $\Jj$ is a model of $\Tt^B$, take any axiom from this TBox.
	It is of the form $K^B\sqsubseteq M^B$ for some axiom $(K\sqsubseteq M)\in\Tt$.
	The latter axiom holds in $\Ii$, that is $K^\Ii\subseteq M^\Ii$, and from \cref{lemma:intFLBOTREGtointFLREG} we know that $(K^B)^\Jj = K^\Ii$ and $(M^B)^\Jj = M^\Ii$
	(where it is important that $B$, selected as a fresh concept, does not occur in $K$ nor in $M$).
	It follows immediately that $(K^B)^\Jj \subseteq (M^B)^\Jj$.
	
	Next, observe that for every concept name $A \in \SigmaC$ we have $B^\Jj \subseteq A^\Jj$, and for every role name $r \in \SigmaR$ we have $B^\Jj \subseteq (\forall r.B)^\Jj$,
	since $B^\Jj=\emptyset$ and the empty set is a subset of any set.
	It follows that $\Jj$ satisfies all axioms from $\Rr$.
	Thus, $\Jj$ is a model of $\Tt^B \cup\Rr$.
	
	By the assumptions of the theorem, $C$ is subsumed by $D$ in every model of $\Tt^B \cup\Rr$, in particular in $\Jj$, hence we have $(C^B)^\Jj \subseteq (D^B)^\Jj$.
	From \cref{lemma:intFLBOTREGtointFLREG} we know that $(C^B)^\Jj = C^\Ii$ and $(D^B)^\Jj = D^\Ii$ (again, $B$ does not occur in $C$ nor in $D$), and hence $C^\Ii \subseteq D^\Ii$.
	
	Since $\Ii$ was an arbitrary model of $\Tt$, we conclude that $C \sqsubseteq_\Tt D$ holds.
	This finishes the proof of the right-to-left implication of \cref{theorem:fromFLBOTREGtoFLREG}.
	
	Next, we have lemmata used in the left-to-right implication of the proof.
	In this lemmata $\Jj$ is again an interpretation used for concept descriptions in \flreg, but now it is arbitrary, so $B$ may be interpreted as a nonempty set.
	A proof of \cref{lemma:properties-of-B} uses the following auxiliary \lcnamecref{lemma:B-word}.
	
	\begin{lemma}\label{lemma:B-word}
		For every word $w\in\SigmaR^*$ and every concept description $M$, in every model $\Jj$ of $\Rr$ in which the subsumption $B\sqsubseteq M$ holds,
		the subsumption $B\sqsubseteq\forall w.M$ holds as well.
	\end{lemma}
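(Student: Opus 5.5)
We prove the statement by induction on the length of the word $w$, keeping $M$ universally quantified in the induction hypothesis (or, equivalently, fixing $M$ and peeling letters off the \emph{front} of $w$). Fix a model $\Jj$ of $\Rr$ with $B^\Jj\subseteq M^\Jj$.

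\emph{Base case} $w=\varepsilon$. Here $\forall\varepsilon.M\equiv M$, so the claim is exactly the assumption $B\sqsubseteq M$.

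\emph{Inductive step} $w=rw'$ with $r\in\SigmaR$ and $w'\in\SigmaR^*$. By the induction hypothesis applied to $w'$ (with the same $M$), we get $B^\Jj\subseteq(\forall w'.M)^\Jj$. Monotonicity of value restrictions (Section~\ref{section:normal-forms}) then gives $(\forall r.B)^\Jj\subseteq(\forall r.\forall w'.M)^\Jj=(\forall rw'.M)^\Jj$, using the merging law $\forall E_1.\forall E_2.C\equiv\forall E_1E_2.C$. Since $\Jj$ is a model of $\Rr$ and $r\in\SigmaR$, the axiom $B\sqsubseteq\forall r.B$ yields $B^\Jj\subseteq(\forall r.B)^\Jj$. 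Transitivity of subsumption in $\Jj$ then gives $B^\Jj\subseteq(\forall w.M)^\Jj$.

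The only point needing care is the order of decomposition. Splitting $w=rw'$ lets us use the axiom $B\sqsubseteq\forall r.B$ at the outermost position, followed by monotonicity. Splitting $w=w'r$ would instead require first deriving $B\sqsubseteq\forall r.M$, which is a similar step. Either order works, and no further obstacle arises, since the hypothesis $w\in\SigmaR^*$ guarantees that every letter has a corresponding axiom in $\Rr$.
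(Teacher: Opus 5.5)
Your proof is correct and matches the paper's argument: induction on the length of $w$, splitting $w=rv$, applying monotonicity to the induction hypothesis $B\sqsubseteq\forall v.M$ to get $\forall r.B\sqsubseteq\forall rv.M$, and concluding by transitivity with the axiom $B\sqsubseteq\forall r.B$ from $\Rr$. Your side remark that the $w=w'r$ split also works is right, provided the induction hypothesis is quantified over $M$ and applied with $\forall r.M$ in place of $M$.
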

	
	\begin{proof}
		We prove this by induction on the length of $w$.
		In the base case of $w=\varepsilon$ the subsumption $B\sqsubseteq\forall w.M$ holds in $\Jj$ by assumption because $\forall\varepsilon.M\equiv M$.
		A longer word $w$ can be written as $w=rv$, where $r\in\SigmaR$ is a single role name, and $v$ a shorter word.
		The induction hypothesis says that $B\sqsubseteq\forall v.M$ holds in $\Jj$, hence also $\forall r.B\sqsubseteq\forall r.\forall v.M$ by monotonicity of value restrictions.
		The right side can be simplified to the desired form due to the equivalence $\forall r.\forall v.M\equiv\forall rv.M$.
		Simultaneously, in $\Rr$ we have an axiom $B \sqsubseteq \forall r.B$, which holds in $\Jj$; this allows us to conclude by transitivity of subsumption.
	\end{proof}

	\lemmaPropertiesOfB*
	
	\begin{proof}
		We proceed by induction on the structure of $K$; we have several cases depending on the shape of $K$.
		
		The thesis is clear if $K=B$, and if $K=A$ is a concept name from \SigmaC, then the subsumption holds in $\Jj$ because there is an axiom $B \sqsubseteq A$ in $\Rr$.
		
		If $K = \top$, then the subsumption holds because $\top$ is interpreted as the whole universe.
		
		If $K = K_1 \sqcap K_2$, then by the induction hypothesis both $B^\Jj \subseteq K_1^\Jj$ and $B^\Jj \subseteq K_2^\Jj$, hence also $B^\Jj\subseteq K_1^\Jj \cap K_2^\Jj = K^\Jj$.
		
		Finally, suppose that $K=\forall E.M$.
		According to \cref{eq:in-restriction}, in order to prove $B^\Jj\subseteq (\forall E.M)^\Jj$, it is enough to show that $B^\Jj\subseteq(\forall w.M)^\Jj$ for every word $w\in \Ll(E)$.
		Take such a word $w$.
		It consists of role names from \SigmaR, since only such role names occur in $E$.
		Moreover, by the induction assumption we have $B^\Jj\subseteq M^\Jj$.
		We can thus conclude using \cref{lemma:B-word}.
	\end{proof}

	We now switch to a proof of \cref{lemma:intFLREGtointFLBOTREG}, which uses the following auxiliary \lcnamecref{lemma:B4}.
	
	\begin{lemma}\label{lemma:B4}
		Let $\Jj$ be a model of $\Rr$ and $\Ii$ an interpretation such that
		\begin{compactitem}
			\item $\univ^\Ii = \univ^\Jj\setminus B^\Jj$,
			\item $A^\Ii = A^\Jj\setminus B^\Jj$ for each $A\in\names$, and
			\item $r^\Ii = \{(x,y)\in r^\Jj\mid x\notin B^\Jj\land y\notin B^\Jj\}$ for each $r\in\roles$.
		\end{compactitem}
		Then, for every word $w\in\SigmaR^*$ and every concept description $M$ involving only concept names from \SigmaC and role names from \SigmaR
		such that $M^\Ii = (M^B)^\Jj\setminus B^\Jj$, we also have $(\forall w.M)^\Ii = (\forall w.M^B)^\Jj\setminus B^\Jj$
	\end{lemma}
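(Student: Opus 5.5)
We proceed by induction on the length of the word $w$, mirroring the proof of \cref{lemma:B-word}. In the base case $w=\varepsilon$ we have $\forall\varepsilon.M\equiv M$ in both interpretations. The claim then reads $M^\Ii=(M^B)^\Jj\setminus B^\Jj$, which is exactly the assumption.

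For the inductive step we write $w=rv$ with $r\in\SigmaR$ and $v$ shorter. By the induction hypothesis, $M':=\forall v.M$ satisfies $M'^\Ii=(M'^B)^\Jj\setminus B^\Jj$. Note that $M'^B=\forall v.M^B$, and $M'$ still uses only names from \SigmaC and \SigmaR. Since $\forall rv.M\equiv\forall r.M'$, it suffices to prove a one-step statement. For a single role $r\in\SigmaR$ and any $M'$ satisfying the hypothesis, we want $(\forall r.M')^\Ii=(\forall r.M'^B)^\Jj\setminus B^\Jj$. We verify this by comparing memberships for an element $x\in\univ^\Jj\setminus B^\Jj=\univ^\Ii$.

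($\supseteq$) Take $x\notin B^\Jj$ with $x\in(\forall r.M'^B)^\Jj$, and any $y$ with $(x,y)\in r^\Ii$. Then $(x,y)\in r^\Jj$ and $y\notin B^\Jj$. Hence $y\in(M'^B)^\Jj\setminus B^\Jj=M'^\Ii$.

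($\subseteq$) Take $x\in(\forall r.M')^\Ii$, so $x\notin B^\Jj$. Let $y$ be arbitrary with $(x,y)\in r^\Jj$; we must show $y\in(M'^B)^\Jj$.
\begin{compactitem}
\item If $y\notin B^\Jj$, then $(x,y)\in r^\Ii$, so $y\in M'^\Ii\subseteq(M'^B)^\Jj$.
\item If $y\in B^\Jj$, we use \cref{lemma:properties-of-B}, which gives $B^\Jj\subseteq(M'^B)^\Jj$.
\end{compactitem}
The second case is the one place where the key fact is needed. It applies because $M'^B$ is an \flreg description over $\SigmaC\cup\{B\}$ and \SigmaR, and $\Jj$ is a model of $\Rr$.

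The main obstacle is exactly this case of $B$-successors. Elements of $B^\Jj$ were deleted in $\Ii$, so the restriction there says nothing about them, and we must argue that they satisfy $M'^B$ anyway. \cref{lemma:properties-of-B} disposes of this, and the rest is bookkeeping with the definitions of $\univ^\Ii$ and $r^\Ii$.
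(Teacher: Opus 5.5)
Your proof is correct and follows essentially the same route as the paper: induction on the length of $w$, peeling off the first role $r$ so that $w=rv$, applying the induction hypothesis to $\forall v.M$, and using \cref{lemma:properties-of-B} for the $r^\Jj$-successors that lie in $B^\Jj$. The only cosmetic difference is that you prove the two inclusions separately, whereas the paper shows, for each $x\in\univ^\Jj\setminus B^\Jj$, that the membership conditions ($\star$) and ($\star\star$) are equivalent.
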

	
	\begin{proof}
		We proceed by induction on the length of $w$.
		In the base case of $w=\varepsilon$ the thesis follows immediately from the assumption, because $\forall\varepsilon.M\equiv M$ and $\forall\varepsilon.M^B\equiv M^B$.
		Suppose thus that $w$ is longer, and write it as $w=rv$, where $r\in\SigmaR$ is a single role name, and $v$ a shorter word.
		Denoting $K=\forall v.M$, we have $K^B=\forall v.M^B$.
		With this notation, $\forall w.M\equiv\forall r.K$ and $\forall w.M^B\equiv\forall r.K^B$.
		Moreover, the induction hypothesis says that $K^\Ii=(K^B)^\Jj\setminus B^\Jj$.
		Knowing this, we have to prove that $(\forall r.K)^\Ii=(\forall r.K^B)^\Jj\setminus B^\Jj$.
		
		Clearly $(\forall r.K)^\Ii\subseteq \univ^\Ii=\univ^\Jj\setminus B^\Jj$ and $(\forall r.K^B)^\Jj\subseteq\univ^\Jj$,
		hence in order to show $(\forall r.K)^\Ii = (\forall r.K^B)^\Jj \setminus B^\Jj$,
		it is sufficient to prove that for every element $x \in \univ^\Jj\setminus B^\Jj$
		we have the equivalence $x \in (\forall r.K)^\Ii\Longleftrightarrow x \in (\forall r.K^B)^\Jj$.
		
		Recalling the semantics of value restriction we see that $x \in (\forall r.K)^\Ii$ precisely when
		\[
		y\in K^\Ii\mbox{ for all }y\mbox{ such that }(x,y)\in r^\Ii.
		\]
		By assumptions of the lemma, $(x,y)\in r^\Ii$ precisely when $(x,y)\in r^\Jj$ and $x\notin B^\Jj$ and $y\notin B^\Jj$.
		Since only $x \in \univ^\Jj\setminus B^\Jj$ are considered, the condition $x\notin B^\Jj$ can be dropped here.
		Going further, the induction hypothesis $K^\Ii=(K^B)^\Jj\setminus B^\Jj$ implies that for $y\notin B^\Jj$
		the condition $y \in K^\Ii$ is equivalent to $y \in (K^B)^\Jj$.
		Thus we obtain that $x \in (\forall r.K)^\Ii$ precisely when
		\[
		y \in (K^B)^\Jj\mbox{ for all }y\mbox{ such that }(x,y) \in r^\Jj\mbox{ and }y\not\in B^\Jj.\tag{$\star$}
		\]
		On the other hand, $x \in (\forall r.(K^B))^\Jj$ precisely when
		\[
		y \in (K^B)^\Jj\mbox{ for all }y\mbox{ such that }(x,y) \in r^\Jj.\tag{$\star\star$}
		\]
		
		It remains to show that Conditions ($\star$) and ($\star\star$) are equivalent, for every $x \in \univ^\Jj\setminus B^\Jj$.
		Every $y$ considered in Condition ($\star$) is also considered in Condition ($\star\star$), so Condition ($\star\star$) implies Condition~($\star$).
		
		For a proof of the opposite implication assume that Condition ($\star$) holds and take an arbitrary $y$ such that $(x,y) \in r^\Jj$.
		In order to obtain Condition ($\star\star$), we must show that $y \in (K^B)^\Jj$.
		If $y \notin B^\Jj$, then Condition ($\star$) directly yields $y \in (K^B)^\Jj$.
		On the other hand, by \cref{lemma:properties-of-B} we know that $B^\Jj \subseteq (K^B)^\Jj$, so if $y \in B^\Jj$, then $y \in (K^B)^\Jj$ as well.
		This finishes the proof of the equivalence, hence of the whole lemma.
	\end{proof}

\lemmaintFLREGtointFLBOTREG*

\begin{proof}
	We proceed by induction on the structure of $K$; we have several cases depending on the shape of $K$.
	
	Let $K = A$ be a concept name.
	By assumption we have $A^\Ii = A^\Jj\setminus B^\Jj$.
	Since $K^B = A$, it follows that 
	\[
	K^\Ii = A^\Ii = A^\Jj\setminus B^\Jj = (K^B)^\Jj\setminus B^\Jj.
	\]
	
	Let $K = \top$.
	Then we have $\univ^\Ii=\univ^\Jj\setminus B^\Ii$ by assumption, as well as $K^B = \top$,
	hence
	\[
	K^\Ii = \top^\Ii = \univ^\Ii = \univ^\Jj\setminus B^\Jj = \top^\Jj\setminus B^\Jj = (K^B)^\Jj\setminus B^\Jj.
	\]
	
	Let $K = \bot$.
	Then we have $K^B = B$, hence
	\[
	K^\Ii = \bot^\Ii = \emptyset = B^\Jj\setminus B^\Jj = (K^B)^\Jj\setminus B^\Jj.
	\]
	
	Let $K = K_1 \sqcap K_2$.
	By the induction hypothesis we know that $K_1^\Ii = (K_1^B)^\Jj\setminus B^\Jj$ and $K_2^\Ii = (K_2^B)^\Jj\setminus B^\Jj$.
	Since $K^B=K_1^B\sqcap K_2^B$, we obtain
	\begin{samepage}\begin{align*}
			K^\Ii &= (K_1 \sqcap K_2)^\Ii = K_1^\Ii \cap K_2^\Ii = \bigl((K_1^B)^\Jj\setminus B^\Jj\bigr)\cap\bigl((K_2^B)^\Jj\setminus B^\Jj\bigr) \\
			& =(K_1^B)^\Jj \cap (K_2^B)^\Jj\setminus B^\Jj =(K_1^B\sqcap K_2^B)^\Jj\setminus B^\Jj=(K^B)^\Jj\setminus B^\Jj.
		\end{align*}
	\end{samepage}
	
	Finally, let $K = \forall E.M$.
	Then $K^B=\forall E.M^B$.
	From \cref{eq:in-restriction} we know that $x\in(\forall E.M)^\Ii$ precisely when $x\in(\forall w.M)^\Ii$ for all words $w\in\Ll(E)$.
	Likewise, $x\in(\forall E.M^B)^\Jj\setminus B^\Jj$ precisely when $x\in(\forall w.M^B)^\Jj\setminus B^\Jj$ for all words $w\in\Ll(E)$.
	It is thus enough to prove for every word $w\in\Ll(E)$ that $(\forall w.M)^\Ii=(\forall w.M^B)^\Jj\setminus B^\Jj$.
	Note that words from $\Ll(E)$ consist of role names from \SigmaR, since only such role names occur in $E$.
	Moreover, the induction hypothesis implies $M^\Ii = (M^B)^\Jj\setminus B^\Jj$.
	We thus obtain the desired equality $(\forall w.M)^\Ii=(\forall w.M^B)^\Jj\setminus B^\Jj$ from \cref{lemma:B4}.
	\qed\end{proof}

We can now finish the proof of \cref{theorem:fromFLBOTREGtoFLREG}, showing its left-to-right implication:
we assume that $C \sqsubseteq_\Tt D$ holds, and we prove that then $C^B \sqsubseteq_{\Tt^B \cup \Rr} D^B$ holds. To this end, take an arbitrary model $\Jj$ of $\Tt^B \cup \Rr$.

First consider the case where $B^\Jj=\univ^\Jj$. Since $D^B$ is a concept description involving only concept names from $\SigmaC\cup\{B\}$ and role names from $\SigmaR$, \cref{lemma:properties-of-B} yields $B^\Jj \subseteq (D^B)^\Jj$. Because $B^\Jj=\univ^\Jj$, it follows that $(D^B)^\Jj=\univ^\Jj$. Hence $(C^B)^\Jj \subseteq (D^B)^\Jj$ holds trivially.

It remains to consider the case where $B^\Jj\neq \univ^\Jj$. Then $\univ^\Jj\setminus B^\Jj\neq\emptyset$, and thus the interpretation $\Ii$ from \cref{lemma:intFLREGtointFLBOTREG} is well-defined. Consider a corresponding interpretation $\Ii$ as specified in \cref{lemma:intFLREGtointFLBOTREG}.

In order to see that $\Ii$ is a model of $\Tt$, consider any axiom $(K\sqsubseteq M)\in\Tt$. We have $(K^B \sqsubseteq M^B) \in \Tt^B$, and since $\Jj$ is a model of $\Tt^B$, it follows that $(K^B)^\Jj \subseteq (M^B)^\Jj$. By \cref{lemma:intFLREGtointFLBOTREG}, we obtain $K^\Ii=(K^B)^\Jj\setminus B^\Jj\text{ and } M^\Ii=(M^B)^\Jj\setminus B^\Jj,$ where $K$ and $M$ involve only concept names from $\SigmaC$ and role names from $\SigmaR$. Thus $K^\Ii \subseteq M^\Ii,$ and so the axiom $(K\sqsubseteq M)$ holds in $\Ii$. Since the axiom was arbitrary, we conclude that $\Ii$ is a model of $\Tt$.

By assumption, the subsumption $C\sqsubseteq_\Tt D$ holds, and therefore, since $\Ii$ is a model of $\Tt$, we have $C^\Ii \subseteq D^\Ii$.

Applying \cref{lemma:intFLREGtointFLBOTREG} to $C$ and $D$, we obtain $(C^B)^\Jj\setminus B^\Jj = C^\Ii \subseteq D^\Ii = (D^B)^\Jj\setminus B^\Jj$.

On the other hand, \cref{lemma:properties-of-B} implies that $B^\Jj \subseteq (D^B)^\Jj$.

We now show that $(C^B)^\Jj \subseteq (D^B)^\Jj$. Take any $x\in (C^B)^\Jj$. If $x\in B^\Jj$, then $x\in (D^B)^\Jj$ by the above inclusion. If $x\notin B^\Jj$, then $x\in (C^B)^\Jj\setminus B^\Jj \subseteq (D^B)^\Jj\setminus B^\Jj,$ and hence again $x\in (D^B)^\Jj$. Thus every element of $(C^B)^\Jj$ belongs to $(D^B)^\Jj$, that is, $(C^B)^\Jj \subseteq (D^B)^\Jj$.

Since $\Jj$ was chosen to be an arbitrary model of $\Tt^B \cup \Rr$, the subsumption $C^B \sqsubseteq_{\Tt^B \cup \Rr} D^B$ holds, as required.


\end{document}